\newcommand\SubAlg[1]{%
\vspace*{-.7\baselineskip}\Statex\hspace*{\dimexpr-\algorithmicindent-2pt\relax}\rule{0.75\textwidth}{0.4pt}%
\Statex\hspace*{-\algorithmicindent}\textbf{#1}%
\vspace*{-.7\baselineskip}\Statex\hspace*{\dimexpr-\algorithmicindent-2pt\relax}\rule{0.75\textwidth}{0.4pt}%
}
\let\csname equation*\endcsname\relax 
\let\csname endequation*\endcsname\relax 
\DeclareMathAlphabet      {\mathup}{OT1}{\familydefault}{m}{n}
\newtheorem{theorem}{Theorem}[section]
\newtheorem{lemma}[theorem]{Lemma}
\newtheorem{remark}[theorem]{Remark}
\definecolor{grassgreen}{RGB}{92,135,39}
\newcommand{\diag}[1]{\mathsf{diag}\left(#1\right)}                 
\newcommand{\blkdiag}[1]{\mathsf{blkdiag}\left(#1\right)}                 
\renewcommand{\mat}[1]{\mathbf{{#1}}}                                 
\renewcommand{\vec}[1]{{\mathchoice                                 
                       {\mbox{\boldmath$\displaystyle{#1}$}}
                       {\mbox{\boldmath$\textstyle{#1}$}}
                       {\mbox{\boldmath$\scriptstyle{#1}$}}
                       {\mbox{\boldmath$\scriptscriptstyle{#1}$}}}
                      }
\newcommand*{\tran}{^{\mkern-1.5mu\mathsf{T}}}                
\newcommand{\domain}{\mathcal{D}}                               
\newcommand{\trace}{\mathsf{tr}}                              
\newcommand{\Trace}[1]{\mathsf{tr} \left(#1\right)}                              
\newcommand{\norm}[1]{\left\| {#1} \right\|}                  
\newcommand{\sqnorm}[1]{\left\| {#1} \right\|^2}              
\newcommand{\sqwnorm}[2]{\left\| {#1} \right\|^2_{#2}}        
\newcommand{\ip}[2]{{\left\langle {#1}, {#2} \right\rangle}}  
\newcommand{\ipg}[2]{{\langle\!\langle {#1}, {#2} \rangle\!\rangle}}  
\newcommand{\wip}[3]{\left\langle{#1},                        
    {#2}\right\rangle_{\!\scriptscriptstyle{\mathup{#3}}}}
\newcommand\restr[2]{                                         
                     { \left.\kern-\nulldelimiterspace
                     {#1}\vphantom{\big|} \right|_{#2}}
                    }
\newcommand{\Rnum}{\mathbb{R}}  
\newcommand{\xcont}{u}                           
\newcommand{\y}{\mathbf{y}}                               
\newcommand{\obs}{\y}                                     
\newcommand{\yt}[1]{{\y}_{#1}}                            
\newcommand{\paramcont}{\theta}                               
\newcommand{\param}{\vec{\theta}}                               
\newcommand{\iparam}{\param}                              
\newcommand{\iparprior}{\iparam_{\rm pr}}            
\newcommand{\iparb}{\iparprior}            
\newcommand{\iparamb}{\iparprior}            
\newcommand{\iparpost}{\iparam_{\rm post}^\obs}           
\newcommand{\ipara}{\iparpost}            
\newcommand{\iparama}{\iparpost}            
\newcommand{\wiparama}{\iparpost(\design)}            
\newcommand{\Nstate}{\textsc{N}_{\rm state}}                
\newcommand{\Nparam}{{N_\mathup{\theta}}}
\newcommand{\Nobs}{\textsc{N}_{\rm obs}}                    
\newcommand{\Npred}{\textsc{N}_{\rm goal}}                  
\newcommand{\nobs}{{\textsc{N}_t}}                   
\newcommand{\Nsens}{\textsc{N}_{\rm s}}   
\newcommand{\Cparamprior}{\mat\Gamma_{{\rm pr}}}                
\newcommand{\Cparampost}{\mat\Gamma_{{\rm post}}}               
\newcommand{\Cobsnoise}{\mat{\Gamma}_{ {\!\rm noise}}}            
\newcommand{\Cpred}{\mat{\Sigma}}  
\newcommand{\Cpredprior}{\Cpred_{ {\rm pr}}}  
\newcommand{\Cpredpost}{\Cpred_{ \rm post}}   
\newcommand{\Cparampriormat}{\Cparamprior}                  
\newcommand{\Cparampostmat}{\Cparampost}                   
\newcommand{\wCparampostmat}{\Cparampostmat(\design)}                   
\newcommand{\Cobsnoisemat}{\mat{R}}                       
\newcommand{\Cpredpriormat}{\Cpredprior}  
\newcommand{\Cpredpostmat}{\Cpredpost}    
\newcommand{\wCpredpostmat}{\Cpredpostmat(\design)}    
\newcommand{\F}{\mathbf{F}}                                          
\newcommand{\Hessmat}{\mat{H}}                                       
\newcommand{\HMmat}{\Hessmat_{ {\rm misfit}}}                    
\newcommand{\Predmat}{\mat{P}}                                           
\newcommand{\GA}{\mathup{GA}}
\newcommand{\GD}{\mathup{GD}}
\newcommand{\KL}{\mathup{KL}}
\newcommand{\one}{\mathbf{1}}
\newcommand{\Prob}{\mathbb{P}}                                 
\newcommand{\GM}[2]{\mathcal{N}\!\left( {#1}, {#2}\right)}     
\newcommand{\Pb}{\Prob^{\rm b}}                                
\newcommand{\Pa}{\Prob^{\rm a}}                                
\newcommand{\priorm}{\mu_{\mathup{pr}}}                          
\newcommand{\DKL}[2]{\mathsf{D}_\mathup{KL}\left\{{#1}\, \|\, {#2}\, \right\}}  
\newcommand{\Expect}[2]{\mathbb{E}_{#1}{\left[ #2 \right]} }  
\newcommand{\tind}[1]{t_{\rm #1}}
\newcommand{\D}{\mathcal{D}}
\newcommand{\design}{\vec{w}}      
\newcommand{\designmat}{\mat{W}}   
\newcommand{\wdesignmat}{\designmat_{\Gamma}}   
\newcommand{\pred}{\vec{\rho}}                                    
\newcommand{\predb}{\pred_{\rm pr}}                                 
\newcommand{\preda}{\pred_{\rm post}}                                 
\newcommand{\wpreda}{\preda(\design)}                                 
\newcommand{\wHessmat}{\Hessmat(\design)}                    
\newcommand{\wHMmat}{\Hessmat_{\rm misfit}(\design)}      
\newcommand{\commentout}[1]{\iffalse {#1} \fi}
\begin{document}

\title[Goal-Oriented Optimal Design of Experiments (GOODE)]{Goal-Oriented Optimal Design of Experiments for \\Large-Scale Bayesian Linear Inverse Problems}


\author{Ahmed Attia \footnote{Corresponding author: {Ahmed Attia, \href{mailto:attia@mcs.anl.gov}{attia@mcs.anl.gov}}} }   
	\address{Mathematics and Computer Science Division, Argonne National Laboratory, Argonne, IL}
	\address{Statistical and Applied Mathematical Science Institute (SAMSI), RTP, NC}
	\ead{attia@mcs.anl.gov}

\author{Alen Alexanderian}
	\address{Department of Mathematics, North Carolina State University, Raleigh, NC}
	\ead{alexanderian@ncsu.edu}

\author{Arvind K. Saibaba}
	\address{Department of Mathematics, North Carolina State University, Raleigh, NC}
	\ead{asaibab@ncsu.edu}

\vspace{10pt}
\begin{indented}
	\item[]December 2017
\end{indented}

\begin{abstract} We develop a framework for goal-oriented optimal design of
experiments (GOODE) for large-scale Bayesian linear inverse problems governed
by PDEs. This framework differs from classical Bayesian optimal design of
experiments (ODE) in the following sense: we seek experimental designs that
minimize the posterior uncertainty in the experiment end-goal, e.g., a
 quantity of interest (QoI), rather than the estimated parameter
itself.  This is suitable for scenarios in which the solution of an inverse
problem is an intermediate step and the estimated parameter is then used to
compute a  QoI.  In such problems, a GOODE approach has two benefits:
the designs can avoid wastage of experimental resources by a targeted
collection of data, and the resulting design criteria are computationally
easier to evaluate due to the often low-dimensionality of the QoIs.  We
present two modified design criteria, A-GOODE and D-GOODE, which are natural
analogues of classical Bayesian A- and D-optimal criteria. We analyze the
connections to other ODE criteria, and provide interpretations for the GOODE
criteria  by using tools from information theory.  Then, we develop an
efficient gradient-based optimization framework for solving the GOODE
optimization problems.  Additionally, we present comprehensive numerical
experiments testing the various aspects of the presented approach. The driving
application is the optimal placement of sensors to identify the source of
contaminants in a diffusion and transport problem.  We enforce sparsity of the
sensor placements using an $\ell_1$-norm penalty approach, and propose a
practical strategy for specifying the associated penalty parameter.

\end{abstract}

\vspace{2pc}
\noindent{\it Keywords}: Design of Experiments, Inverse Problems, Sensor placement.

\submitto{\IP}

%
%

\section{Introduction} \label{sec:Introduction}
%
Continuous advances in numerical methods and computational
technology have made it feasible to simulate large-scale physical phenomena
such as weather systems, computer vision, and medical imaging. Mathematical
models are widely used in practice to predict the behavioral patterns of such
physical processes. In the applications we consider, the mathematical models
are typically described by systems of partial differential equations (PDEs). However, parameters
that are needed for a full description of the mathematical models, such as
initial and boundary conditions, or coefficients, are typically unknown and
need to be inferred from experimental data by solving an inverse problem. The
acquisition of data is usually a laborious or expensive process
and has a certain cost associated with it. 
Due to budgetary or physical considerations,
often times, only a limited amount of data can be collected. Even in
applications where collecting data is relatively cheap, processing large
amounts of data can be computationally cumbersome, or a poor design may lead to
wastage of resources, or may miss out on important information regarding the
parameters of interest. Therefore, it is important to control the experimental
conditions for data acquisition in a way that makes optimal use of resources to
accurately reconstruct or infer the parameters of interest.  This is known as
Optimal Design of Experiments (ODE). 

In this article we adopt the Bayesian approach for solving inverse problems.
The Bayesian approach has the following ingredients: the data, the mathematical
model, the statistical description of the observational noise, and the prior
information about the parameters we wish to infer. Bayes' theorem is used to
combine these ingredients to produce the posterior distribution, which
encapsulates the uncertainty in every aspect of the inverse problem. The
posterior distribution can be interrogated in various ways: one can compute the
peak of this distribution, called the maximum a posteriori probability (MAP)
estimate, which estimates the posterior mode of the parameter, draw samples
from this distributions, or compute the conditional mean. The Bayesian approach
to ODE aims to minimize various measures of uncertainty in the inferred
parameters by minimizing certain criteria based on the posterior distribution.
Popular examples of design criteria include the Bayesian A- and D-optimal
criteria~\cite{AtkinsonDonev92,ChalonerVerdinelli95,Pukelsheim93}.  For
Gaussian posteriors, the A- and D-optimal criteria are defined as the trace and
the log-determinant of the posterior covariance matrix, respectively.

ODE is an active area of
research~\cite{Ucinski05,AtkinsonDonev92,Pukelsheim93,Pazman86,ChalonerVerdinelli95,
BauerBockKorkelEtAl00,KorkelKostinaBockEtAl04,
HaberHoreshTenorio10,HoreshHaberTenorio10,ChungHaber12,
HuanMarzouk13,LongScavinoTemponeEtAl13,SanduCioacaRao13,HuanMarzouk14,
AlexanderianPetraStadlerEtAl14,LongMotamedTempone15,Ucinski15,
AlexanderianPetraStadlerEtAl16,AlexanderianGloorGhattas16,
BisettiKimKnioEtAl16,CrestelAlexanderianStadlerEtAl17,YuZavalaAnitescu17,
WalshWildeyJakeman17,ruthotto2017optimal}.  
Specifically, in the recent years a number of advances have been made
on optimal design of experiments for large scale 
applications. The articles~\cite{HaberHoreshTenorio08,HaberHoreshTenorio10,HoreshHaberTenorio10,HaberMagnantLuceroEtAl12,TenorioLuceroBallEtAl13,
AlexanderianPetraStadlerEtAl14,
AlexanderianPetraStadlerEtAl16,CrestelAlexanderianStadlerEtAl17} target A-optimal 
experimental designs for large-scale inverse problems.  Fast algorithms for
computing D-optimal experimental design criterion, given by expected
information gain, for nonlinear inverse problems, were introduced
in~\cite{LongScavinoTemponeEtAl13,LongMotamedTempone15,BeckDiaEspathEtAl17}.
An efficient greedy algorithm for computing Bayesian D-optimal designs,
with correlated observations, is introduced in~\cite{khodja2010guided}.
Choosing a D-optimal experimental design that targets a specific region
in the parameter space is discussed in~\cite{djikpesse2012bayesian}, 
where the optimal design minimizes the marginal uncertainties in the region of interest. 
The works~\cite{AlexanderianGloorGhattas16} and~\cite{AlexanderianSaibaba17}
address theory and computational methods for Bayesian D-optimal design in
infinite-dimensional Bayesian linear inverse problems. Motivated by goal-oriented approaches for parameter dimensionality reduction~\cite{LiebermanWillcox13,LiebermanWillcox14,spantini2017goal}, our paper presents theory and methods
for goal-oriented optimal design of experiments (GOODE).

There are two potential drawbacks in the standard Bayesian approach for ODE.
First, in certain applications, what may be of interest is not the
reconstructed parameter in itself, but some prediction quantity involving the reconstructed
parameter. In this situation, it may be desirable to deploy valuable resources
to collect experimental data so as to minimize the uncertainty in the
end-goal, i.e. prediction, rather than the reconstructed quantity. Second, the reconstructed
parameters are often spatial images or infinite dimensional functions. When
discretized on a fine-scale grid, the resulting parameter dimension is very
high. The posterior covariance matrix 
is also very high dimensional; forming and storing this
covariance matrix explicitly is computationally infeasible on problems
discretized on very fine grid resolutions. Consequently, evaluating
the optimal design criteria is challenging. Randomized matrix 
methods~\cite{AvronToledo11,saibaba2016randomized}
have been instrumental in addressing such computational challenges. 
When the dimension of the
predictions is smaller than the dimension of the reconstructed parameter,
working in the prediction space may be computationally beneficial. These two
reasons---the need for designs tailored to predictions and the computational 
savings offered by targeting low-dimensional prediction quantities---motivate us to 
propose goal-oriented criteria, and devise efficient 
algorithms for their computation and optimization.   

As a motivating application, consider the transport of a contaminant
in an urban environment. The inverse problem of interest here seeks to
identify the source of the contaminant from measurements collected at sensor
locations. The standard Bayesian approach to ODE involves controlling the
sensor locations in order to reconstruct the source (represented as a spatial
function) with minimized uncertainty over the 
entire domain. On the other hand, if instead of determining the initial
condition, the goal is to predict the average contaminant around a building
after a certain amount of time has elapsed, then the experimental design should
explicitly account for this goal.
This application is explored in detail in Sections~\ref{sec:Experiment_Setup}
and~\ref{sec:Numerical_Results}. As a preview, in
Figure~\ref{fig:scalar_GOOED_A_Optimal_weights} we show optimal sensor placements
corresponding to three different goals:
roughly speaking, the left, middle, and right figures depict sensor placements that are
focused on the first building, the second building, and both buildings, respectively.
This shows immediately, that incorporating the end goal, i.e., the 
target prediction, in the ODE problem results in different sensor placements, 
which may be valuable in practice. 
More details are provided in Section~\ref{sec:Numerical_Results}.
\begin{figure}[ht]\centering
  \includegraphics[width=0.60\linewidth]{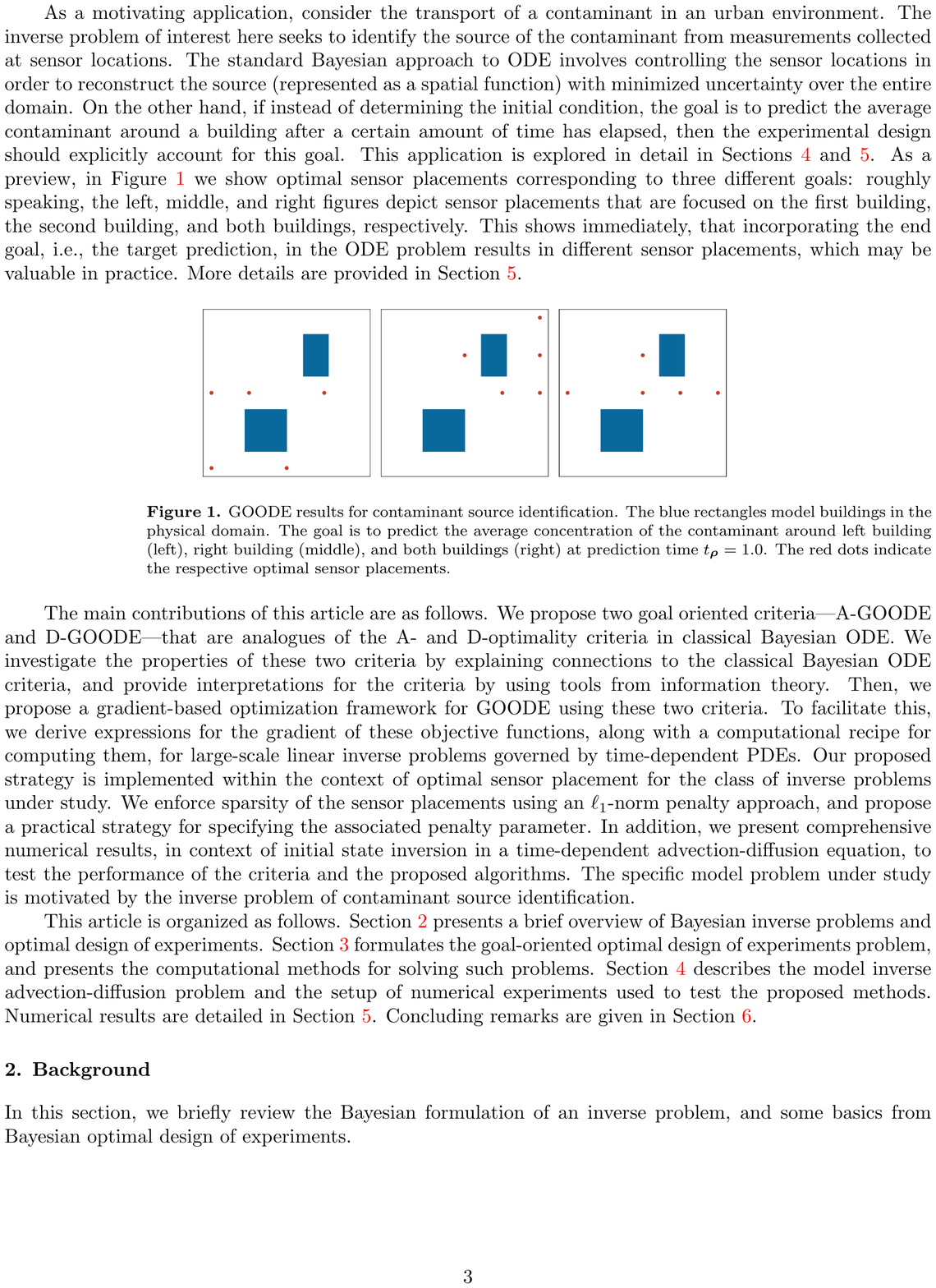}
\caption{GOODE results for  contaminant source identification.  
The blue rectangles model buildings in the physical domain.
The goal is to predict the
average concentration of the contaminant around
left building (left), right building (middle), and both buildings (right) at
prediction time $t_\pred=1.0$. The red dots indicate the respective optimal sensor placements.}
\label{fig:scalar_GOOED_A_Optimal_weights}
\end{figure}

The main contributions of this article are as follows. We propose two goal
oriented criteria---A-GOODE and D-GOODE---that are analogues of the A- and
D-optimality criteria in classical Bayesian ODE.  We investigate the properties
of these two criteria by explaining connections to the classical Bayesian ODE
criteria, and provide interpretations for the criteria by using tools from
information theory. Then, we propose a gradient-based optimization framework for
GOODE using these two criteria. To facilitate this, we derive expressions for
the gradient of these objective functions, along with a computational recipe
for computing them, for large-scale linear inverse problems governed by time-dependent
PDEs. Our proposed strategy is implemented within the context of optimal sensor
placement for the class of inverse problems under study.
We enforce sparsity of the sensor placements using an $\ell_1$-norm
penalty approach, and propose a practical strategy for specifying the
associated penalty parameter.  In addition, we present comprehensive numerical
results, in context of initial state inversion in a time-dependent
advection-diffusion equation, to test the performance of the criteria and the
proposed algorithms. The specific model problem under study is motivated by
the inverse problem of contaminant source identification. 

This article is organized as follows.  Section~\ref{sec:Background} presents a
brief overview of Bayesian inverse problems and optimal design of experiments.
Section~\ref{sec:GOODE} formulates the goal-oriented optimal design of
experiments problem, and presents the computational methods for solving such
problems.  Section~\ref{sec:Experiment_Setup} describes the model inverse
advection-diffusion problem and the setup of numerical experiments used to test
the proposed methods. Numerical results are detailed in
Section~\ref{sec:Numerical_Results}. Concluding remarks are given in
Section~\ref{sec:Conclusions}.

\section{Background}
\label{sec:Background}
In this section, we briefly review the Bayesian formulation of an inverse
problem, and some basics from Bayesian optimal design of experiments.

\subsection{Bayesian inverse problem}\label{subsec:Bayesian_inversion}
Consider the problem of reconstructing an
unknown parameter $\paramcont$ using noisy data $\obs$ and a model
$\mathcal{F}$
\begin{equation}
	\obs  = \mathcal{F}(\paramcont) + \vec{\delta},
\end{equation}
where $\vec{\delta} \in \Rnum^{\Nobs}$ is a centered random
variable that models measurement noise.  Specifically, 
we consider a Gaussian noise model 
$\vec{\delta} \sim \GM{\vec{0}}{\Cobsnoise}$.
We consider the case that $\obs$ is a
(finite-dimensional) vector of measurement data, $\paramcont$ is an element of
an appropriate infinite-dimensional real separable Hilbert space, and
$\mathcal{F}$ is a continuous linear transformation, which will we will refer
to as the parameter-to-observable map. In applications we target $\paramcont
\in L^2(\D)$, where $\D$ is a bounded domain in $\Rnum^n$, with $n = 1,
2$, or $3$.  The infinite-dimensional formulation, finite-element
discretization, and numerical solution of this problem have been addressed in
detail in~\cite{Bui-ThanhGhattasMartinEtAl13}, under the assumption of a
Gaussian prior and additive Gaussian noise model, which is the setting we
consider here.

To keep the presentation simple, we consider the discretized version of the
problem. However, in what follows, we pay close attention to the issues pertaining
to discretization of the infinite-dimensional problem and the infinite-dimensional limit.
We denote the Gaussian prior for the discretized parameter $\iparam
\in \Rnum^{\Nparam}$ as $\GM{\iparb}{\Cparampriormat}$, and assume $\vec{\delta} \sim \GM{\vec{0}}{\Cobsnoise}$. Letting
$\F$ be the discretized parameter-to-observable map, 
the additive Gaussian noise assumption leads to the Gaussian likelihood,
\begin{equation}
\Prob (\obs | \iparam ) \propto \exp{\left( - \frac{1}{2} \sqwnorm{ \F\iparam - \obs }{ \Cobsnoise^{-1} } \right) }.
\end{equation}
That is, $\obs | \iparam \sim \GM{\F\iparam}{\Cobsnoise}$.
It is well known (see e.g.,~\cite[Chapter 3]{Tarantola05}) that, in this setting, the 
posterior is also a Gaussian 
$\GM{\ipara}{\Cparampostmat}$ with
\begin{equation}\label{eqn:Analysis_param_covar}
	\Cparampostmat = \left( \F^* \Cobsnoise^{-1} \F  + \Cparampriormat^{-1} \right)^{-1} \,,  \qquad
	\ipara = \Cparampostmat \left( \Cparampriormat^{-1} \iparb + \F^* \Cobsnoise^{-1}\, \obs \right).
\end{equation}
It is also worth mentioning that the posterior mean $\ipara$ is the minimizer of the following functional
\begin{equation}\label{equ:J}
	J(\iparam) :=   \frac12  \sqwnorm{ \F\iparam - \obs }{ \Cobsnoise^{-1} }
	+ \frac12  \sqwnorm{ \iparam - \iparb }{ \Cparamprior^{-1}}.
\end{equation}
The Hessian of the above functional, is given by
\begin{equation}\label{eqn:Hessian}
	\Hessmat = \HMmat + \Cparampriormat^{-1} = \Cparampostmat^{-1},
\end{equation}
where $\HMmat = \F^* \Cobsnoise^{-1} \F$ denotes the Hessian of the data-misfit term in~\eqref{equ:J}.

Note that for the linear operator $\F$, we use $\F^*$ to denote its adjoint.
The reason we do not simply use matrix transpose is as follows.  The underlying
infinite-dimensional Bayesian inverse problem is formulated on $L^2(\D)$
equipped with the standard $L^2(\D)$ inner product. As noted
in~\cite{Bui-ThanhGhattasMartinEtAl13}, when discretizing the Bayesian inverse
problem, we also need to pay attention to the choice of inner products. Namely,
the discretized parameter space $\Rnum^\Nparam$ has to be equipped with a
discretized $L^2(\D)$ inner product. If finite element method is used to
discretize the inverse problem, then, the appropriate inner product to use is
the Euclidean inner product weighted by the finite-element mass matrix
$\mat{M}$. That is, for $\vec{u}$, $\vec{v}$ in $\Rnum^\Nparam$, we use
$\wip{\vec{u}}{\vec{v}}{\mat{M}} = \vec{u}\tran \mat{M} \vec{v}$.  
In the infinite-dimensional setting, the forward operator is
a mapping from $L^2(\D)$ to $\Rnum^\Nobs$; the domain is
endowed with the $L^2(\D)$ inner product and the co-domain is 
equipped with the Euclidean inner product, 
which we denote by $\ip{\cdot}{\cdot}$. Upon discretization, 
we work with the discretized forward operator
$\F:(\Rnum^{\Nparam},\wip{\cdot}{\cdot}{\mat{M}}) \to (\Rnum^{\Nobs},
\ip{\cdot}{\cdot})$. For $\vec{u} \in \Rnum^{\Nparam}$ and
$\vec{v} \in \Rnum^{\Nobs}$, we have
\[
    \ip{\mat{F}\vec{u}}{\vec{v}} = (\mat{F}\vec{u})\tran \vec{v}
    = \vec{u}\tran \mat{F}\tran \vec{v} = 
      \vec{u}\tran \mat{M} \mat{M}^{-1} \mat{F}\tran \vec{v}
    = \wip{\vec{u}}{\mat{M}^{-1}\mat{F}\tran\vec{v}}{\mat{M}};
\] 
from this we note
$\F^* = \mat{M}^{-1}\mat{F}\tran$. 
See~\cite{Bui-ThanhGhattasMartinEtAl13} for further details
on finite-element discretization of Bayesian inverse problems.

\subsection{Bayesian Optimal design of experiments}\label{subsec:OED}

Next, we turn to the problem of optimal design of experiments (ODE) for Bayesian
linear inverse problems governed by PDEs, which has received great amount of
attention in recent
years~\cite{bisetti2016optimal,HuanMarzouk13,HuanMarzouk14,LongScavinoTemponeEtAl13,long2015laplace}.
In a standard Bayesian experimental design problem, we seek an experimental design
that results in minimized \textit{posterior uncertainty} in the inferred
parameter $\iparam$. The way one chooses to quantify posterior
uncertainty leads to the choice of the design
criterion~\cite{AtkinsonDonev92,bernardo1979expected,box1982choice,ChalonerVerdinelli95,dette1996note,elfving1952optimum,KieferWolfowitz59,lindley1956measure,Ucinski05,Pukelsheim93}.
When the posterior distribution is Gaussian, the standard experimental design criteria are defined as functionals of
$\wCparampostmat$; here $\vec{w}$ denotes a generic vector of experimental design 
parameters.
For example, the A-optimal design is found by minimizing the trace of the posterior covariance operator
\begin{equation} \label{eqn:standard_A_optim_optimization_crit}
	\Psi^{\mathup{A}}(\vec w) := \trace{\left( \Cparampostmat(\design) \right)} \,.
\end{equation}
On the other hand, a D-optimal design is obtained by minimizing the log-determinant of the posterior covariance
\begin{equation} \label{eqn:standard_D_optim_optimization_crit}
	\Psi^{\mathup{D}}(\design ) := { \log{ \det {\left(  \Cparampostmat(\design)  \right)}} }
	= \log{ \det{\left( \left[ \Hessmat(\design) \right]^{-1} \right)} }  \,.
\end{equation}
See
e.g.,~\cite{alexanderian2016bayesian,AtkinsonDonev92,HaberHoreshTenorio08,HaberMagnantLuceroEtAl12}
for further details. 
As discussed in the introduction, the present work is focused on goal-oriented 
optimal design of experiments. This is detailed in the next section.

\section{Goal-Oriented Optimal Design of Experiments}\label{sec:GOODE}

Classical Bayesian optimal experimental design constructs experimental designs that result 
in minimized posterior
uncertainty on the inversion parameter $\iparam$. On the other hand, 
goal-oriented optimal design of experiments (GOODE) seeks
designs that minimize the uncertainty associated with 
a goal quantity of interest (QoI), 
which is a function of $\iparam$.
In this article, we consider a goal QoI of the form
\begin{equation}\label{eqn:GOOED_linear_QoI}
	\pred =  \Predmat \iparam,
\end{equation}
where $\Predmat$ is a linear operator, which we call
the \emph{goal operator}. 
The parameter $\iparam$ is inferred by solving
an inverse problem, as described in the previous section. 

As mentioned before, to keep the presentation simple, we work with discretized
quantities. The goal operator $\Predmat$ is thus the discretization of a linear
transformation that maps the inversion parameter, an element of $L^2(\D)$, 
to a goal QoI. We consider the case where the goal QoI is finite-dimensional;
i.e., the goal operator has a finite-dimensional range 
independent of discretization.
This is motivated by many applications in which the end goal 
is either a scalar or a relatively low-dimensional vector.
We denote the dimension of the goal $\pred$ by $\Npred$.

Assuming the Gaussian linear setting presented above, the prior and posterior laws of $\pred$
defined in~\eqref{eqn:GOOED_linear_QoI} can be obtained as follows: 
the prior
distribution law of $\pred$ is $\GM{\predb}{\Cpredpriormat}$, with
\begin{equation}\label{eqn:prior_prediction_PDF}
	\predb = \Predmat \iparamb,
	\qquad
	\Cpredpriormat = \Predmat \Cparampriormat \Predmat^*,
\end{equation}
where $\Predmat^*$ is the adjoint of the goal operator $\Predmat$.
The posterior distribution law of the goal QoI $\pred$, conditioned on the
observations $\obs$, is also Gaussian and is given by $\GM{\preda}{\Cpredpostmat}$, where
\begin{equation}\label{eqn:posterior_prediction}
	\begin{aligned}
		\preda = \Predmat \iparama,
		\qquad
		\Cpredpostmat = \Predmat \Cparampostmat \Predmat^*
		= \Predmat\, \left( \HMmat + \Cparampriormat^{-1} \right)^{-1} \, \Predmat^*.
	\end{aligned}
\end{equation}

Below, we describe the GOODE criteria under consideration, and present a
scalable framework for computing these criteria and their derivatives with
respect to design parameters, which exploits the fact that the 
end goal $\pred$ is of much lower dimension than $\iparam$. 

We focus on goal-oriented A- and D-optimal experimental designs (see
Section~\ref{sec:criteria}).  
We also examine these criteria from a decision theoretic point of view.  In the case of goal
oriented Bayesian A-optimality, we show that the minimization of expected Bayes
risk of the goal QoI is equivalent to minimizing the trace of the
covariance operator $\Cpredpostmat$. In the case of goal-oriented Bayesian
D-optimality, we show that maximizing the expected information gain for 
the goal QoI is equivalent to minimizing log-determinant of
$\Cpredpostmat$.  These results, which are natural extensions of the known
results from classical Bayesian optimal experimental design theory, provide
further insight on the interpretation of the presented GOODE criteria.
In Section~\ref{sec:sensors}, we describe the precise definition of an
experimental design vector that parameterizes a sensor placement and leads to
formulation of a suitable optimization problem. This is followed by the
discussion of the optimization problem for finding  A- and D-GOODE criteria and
a computational framework for computing the associated objective functions and
their gradients in Section~\ref{subsec:GOODE_optimization}. 
Algorithmic descriptions of the proposed methodologies, along with a discussion 
of the computational cost, are outlined in Section~\ref{subsec:GOODE_algorithms}.
In Section~\ref{subsec:ext}, we further discuss the connections of the GOODE 
criteria to the corresponding classical experimental design criteria, and outline 
possible extensions to the cases of nonlinear goal operators.


\subsection{GOODE criteria}\label{sec:criteria}

Let, as before, $\design$ denote a generic vector of experimental design parameters. 
(The precise definition of $\design$, in the case of optimal sensor placement problems, 
will be provided later in this section.)
In this section, we
will assume that $\Predmat$ has full row-rank, and 
consider a Bayesian linear inverse problem as formulated in
Section~\ref{subsec:Bayesian_inversion}.
%
%
Here we propose and examine the implications of goal-oriented A- and D-optimal criteria.

\paragraph{Goal-oriented A-optimal criterion }
%
The goal-oriented A-optimal design (A-GOODE) criterion is defined to be the trace 
of the posterior covariance matrix $\wCpredpostmat$ of the goal QoI: 
\begin{equation}
	\label{eqn:A_optim_optimization_crit} \Psi^\GA(\design) := \trace{\left(
	\wCpredpostmat \right)} = \trace(\Predmat \Cparampostmat(\design)
	\Predmat^*).  
\end{equation}
%
%
Two special cases are worth pointing out here. In the special case $\Predmat$
is a row vector $\Predmat = \vec{c}\tran$, we get $\Psi^\GA(\design) =
\vec{c}\tran{\Cparampostmat} \vec{c}$, which is the classical C-optimality
criterion. On the other hand, if $\Predmat = \mat{I}$, then this is nothing but
the classical Bayesian A-optimal criterion.

Note that while $\Cparampostmat(\vec{w})$ is a high-dimensional operator, in
practice usually $\Predmat$ has a low-dimensional range. Thus, in practice
$\wCpredpostmat$ is a low-dimensional operator. From computational point of
view, this means computing the A-GOODE criterion is significantly cheaper than
that of the classical A-optimality which is $ \trace(\Cparampostmat)$. 

Minimizing the A-GOODE criterion can be understood as
minimizing the average variance of the goal QoI.  
However, having a measure of the 
statistical quality of the estimator $\preda^\obs$ is also of crucial importance. 
Theorem~\ref{thm:bayesrisk} below addresses this by relating the
goal-oriented A-optimal criterion and the expected Bayes risk of $\preda^\obs$, which
is nothing but the mean squared error averaged over the prior
distribution. 


\begin{theorem}\label{thm:bayesrisk}
Consider a Bayesian linear inverse problem as formulated in 
Section~\ref{subsec:Bayesian_inversion}.
Let $\preda^\obs = \Predmat \iparama$ be the estimator for the 
goal QoI, $\vec\rho = \Predmat \iparam$, 
where $\Predmat$ has full row-rank, then,
	\begin{equation}\label{equ:bayesrisk2}
		\Expect{\priorm}{ \Expect{\obs |\iparam}{ \sqnorm{\preda^\obs - \pred}} }
		= \Trace{\Cpredpostmat}.
	\end{equation}
\end{theorem}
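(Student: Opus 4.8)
The plan is to work directly from the closed-form posterior mean in~\eqref{eqn:Analysis_param_covar} and reduce the nested expectation to two trace computations. First I would substitute the data model $\obs = \F\iparam + \vec{\delta}$, with $\vec{\delta}\sim\GM{\vec{0}}{\Cobsnoise}$, into $\iparama = \Cparampostmat(\Cparampriormat^{-1}\iparb + \F^*\Cobsnoise^{-1}\obs)$ and simplify using $\F^*\Cobsnoise^{-1}\F = \HMmat$ together with $\Cparampostmat^{-1} = \HMmat + \Cparampriormat^{-1}$ from~\eqref{eqn:Hessian}. Writing $\HMmat = \Cparampostmat^{-1} - \Cparampriormat^{-1}$ and collecting terms yields the clean identity $\iparama - \iparam = \Cparampostmat\Cparampriormat^{-1}(\iparb - \iparam) + \Cparampostmat\F^*\Cobsnoise^{-1}\vec{\delta}$, and hence, applying the goal operator $\Predmat$, $\preda^\obs - \pred = \Predmat\Cparampostmat\Cparampriormat^{-1}(\iparb-\iparam) + \Predmat\Cparampostmat\F^*\Cobsnoise^{-1}\vec{\delta}$ --- a ``prior-pull'' error driven by $\iparb - \iparam$ plus a ``noise'' error driven by $\vec{\delta}$.

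Next I would take the inner expectation $\Expect{\obs|\iparam}{\cdot}$, under which $\iparam$ is held fixed and only $\vec{\delta}$ is random with mean zero; the cross term then vanishes and $\Expect{\obs|\iparam}{\sqnorm{\preda^\obs-\pred}} = \sqnorm{\Predmat\Cparampostmat\Cparampriormat^{-1}(\iparb-\iparam)} + \Expect{\vec{\delta}}{\sqnorm{\Predmat\Cparampostmat\F^*\Cobsnoise^{-1}\vec{\delta}}}$. Applying $\Expect{\vec{\delta}}{\sqnorm{\mat{A}\vec{\delta}}} = \Trace{\mat{A}\,\Cobsnoise\,\mat{A}^*}$ with $\mat{A} = \Predmat\Cparampostmat\F^*\Cobsnoise^{-1}$, the noise term collapses to $\Trace{\Predmat\Cparampostmat\F^*\Cobsnoise^{-1}\F\Cparampostmat\Predmat^*} = \Trace{\Predmat\Cparampostmat\HMmat\Cparampostmat\Predmat^*}$, which no longer depends on $\iparam$. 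Then I would take the outer expectation over the prior $\priorm = \GM{\iparb}{\Cparampriormat}$: since $\iparb-\iparam$ has mean zero and covariance $\Cparampriormat$ under $\priorm$, the prior-pull term contributes $\Expect{\priorm}{\sqnorm{\Predmat\Cparampostmat\Cparampriormat^{-1}(\iparb-\iparam)}} = \Trace{\Predmat\Cparampostmat\Cparampriormat^{-1}\Cparampostmat\Predmat^*}$.

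Finally I would add the two contributions and factor out $\Predmat\Cparampostmat(\,\cdot\,)\Cparampostmat\Predmat^*$: the middle factor is $\Cparampriormat^{-1} + \HMmat = \Cparampostmat^{-1}$ by~\eqref{eqn:Hessian}, so the sum telescopes to $\Predmat\Cparampostmat\,\Cparampostmat^{-1}\,\Cparampostmat\Predmat^* = \Predmat\Cparampostmat\Predmat^* = \Cpredpostmat$ by~\eqref{eqn:posterior_prediction}, which is exactly~\eqref{equ:bayesrisk2}. The one place that genuinely needs care --- the main technical point rather than a real obstacle --- is the bookkeeping of adjoints versus transposes across the non-standard inner products (the $\mat{M}$-weighted inner product on the discretized parameter space versus the Euclidean inner products on the data and goal spaces), so that the trace identities, and in particular $\Expect{\vec{\delta}}{\sqnorm{\mat{A}\vec{\delta}}} = \Trace{\mat{A}\Cobsnoise\mat{A}^*}$, are applied consistently. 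Note that the full row-rank hypothesis on $\Predmat$ plays no role in this computation; it is needed only to guarantee that $\Cpredpostmat$ is nonsingular, which matters elsewhere (e.g., for the D-GOODE criterion).
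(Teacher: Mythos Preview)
Your proposal is correct and is essentially the same argument as the paper's. The paper frames the inner expectation via the classical bias--variance decomposition of $\preda^\obs$, obtaining a variance term $\trace(\Predmat\Cparampostmat\HMmat\Cparampostmat\Predmat^*)$ and a squared-bias term whose prior expectation is $\trace(\Predmat\Cparampostmat\Cparampriormat^{-1}\Cparampostmat\Predmat^*)$; you reach the identical two traces by first substituting $\obs=\F\iparam+\vec{\delta}$ and writing $\iparama-\iparam=\Cparampostmat\Cparampriormat^{-1}(\iparb-\iparam)+\Cparampostmat\F^*\Cobsnoise^{-1}\vec{\delta}$, which is exactly the bias--plus--noise split (your ``prior-pull'' term is the bias, your ``noise'' term is the variance). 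The final combination $\HMmat+\Cparampriormat^{-1}=\Cparampostmat^{-1}$ and the observation about the full row-rank hypothesis being inessential here are also in agreement with the paper.
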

\begin{proof}
	See Appendix~\ref{apdx:proof_of_bayesrisk}.
\end{proof}
Note that for notational convenience, and since the result holds 
pointwise in $\design$, 
we have suppressed the dependence on $\design$ in the statement of the theorem.

\begin{remark}
In Theorem~\ref{thm:bayesrisk}, if we choose $\Predmat = \mat{I}$,
i.e., classical Bayesian A-optimal experimental design for a Gaussian posterior
distribution, we recover the known result~\cite{ChalonerVerdinelli95}
	\[
	\Expect{\iparam}{ \Expect{\obs |\iparam}{ \sqnorm{\iparama- \iparam}}}
	= \Trace{\Cparampostmat}.
	\]
	See also~\cite{AlexanderianGloorGhattas16} for a derivation of the above
	expression in the infinite-dimensional setting.
\end{remark}

\paragraph{Goal-oriented Bayesian D-Optimality criterion}
The goal-oriented D-optimal design (D-GOODE) criterion is taken to be the log-determinant of the posterior end-goal covariance
$\Psi^\GD$: 
%
\begin{equation} \label{eqn:D_optim_optimization_crit}
	\Psi^\GD(\design ) := { \log{ \det {\left(  \wCpredpostmat  \right)}} }
	= \log{ \det{\left( \Predmat \left[ \wHessmat \right]^{-1} \Predmat^* \right)} }. 
\end{equation}

To explain the motivation for this optimality criterion, 
we consider the Kullback-Leibler (KL) divergence~\cite{KullbackLeibler51}
from the posterior to prior distribution of the end-goal QoI 
$\pred$:
\begin{equation}\label{equ:KL}
	\begin{aligned}
		\DKL{\Pa(\pred|\obs,\design)}{\Pb(\pred)} &=  \DKL{ \GM{ \wpreda }{ \wCpredpostmat } }{  \GM{ \predb }{ \Cpredpriormat } }.
	\end{aligned}
\end{equation}
Since both distributions are Gaussian, the KL-divergence has a closed form
expression, which will simplify the calculations considerably.  The expected
information gain is defined as
\begin{equation}\label{eqn:Information_Gain}
	\Psi^{\rm KL} (\design) =
	\Expect{\priorm}{ \Expect{\obs|\iparam,\design}{ \DKL{\Pa(\pred|\obs,\design)}{\Pb(\pred)}  } }.
\end{equation}
The classical Bayesian D-optimality criterion is related to the expected information
gain, quantified by the expected  KL divergence
between the posterior distribution and the prior distribution. A  similar
relation for the goal-oriented D-optimality criterion $\Psi^\GD$ can be
derived. We present the following result that relates $\Psi^\GD$ and the
expected information gain $\Psi^\KL$:

\begin{theorem}\label{thm:expinfogain}
Consider a Bayesian linear inverse problem as formulated in
Section~\ref{subsec:Bayesian_inversion}.
Let $\pred = \Predmat \iparam$ be an end-goal QoI,
where $\Predmat$ has full row-rank.
Then,
\begin{equation}\label{equ:DKL}
\Expect{\priorm}{ \Expect{\obs|\iparam}{ \DKL{\Pa(\pred|\obs)}{\Pb(\pred)}  } }
= -\frac12 
 \log{ \det {\left(  \Cpredpostmat  \right)}} 
+ \frac12  \log\det\Cpredprior.
\end{equation}
\end{theorem}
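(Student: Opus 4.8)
The plan is to reduce the double expectation of the KL divergence to a closed-form expression using two facts: (i) the KL divergence between two Gaussians has an explicit formula involving a log-determinant ratio, a trace term, and a quadratic form in the difference of means; and (ii) the expectation of that quadratic form, taken over the joint distribution of $\iparam$ and $\obs$, collapses to a trace by exactly the computation already carried out in Theorem~\ref{thm:bayesrisk}.

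First I would write out, for the two Gaussians appearing in~\eqref{equ:KL}, namely $\GM{\wpreda}{\wCpredpostmat}$ and $\GM{\predb}{\Cpredpriormat}$ (both distributions on $\Rnum^{\Npred}$, where finite-dimensionality of the goal is what makes these determinants well-defined), the standard formula
\[
	\DKL{\GM{\preda^\obs}{\Cpredpostmat}}{\GM{\predb}{\Cpredpriormat}}
	= \frac12\Bigl[ \Trace{\Cpredprior^{-1}\Cpredpostmat} - \Npred + \sqwnorm{\preda^\obs - \predb}{\Cpredprior^{-1}} + \log\det\Cpredprior - \log\det\Cpredpostmat \Bigr].
\]
Here I use that $\Cpredpriormat$ is invertible, which follows from $\Predmat$ having full row-rank together with invertibility of $\Cparampriormat$. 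The $\log\det$ terms and the $-\Npred$ are deterministic, so they pass through both expectations untouched, contributing $\tfrac12\log\det\Cpredprior - \tfrac12\log\det\Cpredpostmat$ and $-\tfrac{\Npred}{2}$. The trace term $\tfrac12\Trace{\Cpredprior^{-1}\Cpredpostmat}$ is likewise deterministic (the posterior covariance does not depend on the data in the linear-Gaussian setting), so it also passes through unchanged.

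The only genuinely stochastic term is $\tfrac12\Expect{\priorm}{\Expect{\obs|\iparam}{\sqwnorm{\preda^\obs - \predb}{\Cpredprior^{-1}}}}$. Here I would split $\preda^\obs - \predb = (\preda^\obs - \pred) + (\pred - \predb)$ and expand the weighted norm into three pieces. The cross term vanishes in expectation: conditioning on $\iparam$, the quantity $\pred - \predb$ is fixed while $\Expect{\obs|\iparam}{\preda^\obs - \pred} = \vec 0$ by unbiasedness of the posterior mean estimator in the linear-Gaussian case (this is precisely the zero-mean computation underlying Theorem~\ref{thm:bayesrisk}). The term $\Expect{}{\sqwnorm{\preda^\obs - \pred}{\Cpredprior^{-1}}}$ is handled exactly as in the proof of Theorem~\ref{thm:bayesrisk} but with the weighting matrix $\Cpredprior^{-1}$ inserted, giving $\Trace{\Cpredprior^{-1}\Cpredpostmat}$; and $\Expect{\priorm}{\sqwnorm{\pred - \predb}{\Cpredprior^{-1}}} = \Trace{\Cpredprior^{-1}\Cpredprior} = \Npred$ since $\pred$ has prior covariance $\Cpredprior$ and prior mean $\predb$. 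Adding everything, the contributions $-\tfrac{\Npred}{2}$, $+\tfrac12\Trace{\Cpredprior^{-1}\Cpredpostmat}$, $+\tfrac{\Npred}{2}$, and $-\tfrac12\Trace{\Cpredprior^{-1}\Cpredpostmat}$ cancel pairwise, leaving only $-\tfrac12\log\det\Cpredpostmat + \tfrac12\log\det\Cpredprior$, which is~\eqref{equ:DKL}.

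The main obstacle is not any single calculation but the bookkeeping of which terms are deterministic and the justification that the cross term vanishes — i.e.\ that $\Expect{\obs|\iparam}{\preda^\obs} = \pred$. That identity rests on the affine form $\ipara = \Cparampostmat(\Cparamprior^{-1}\iparb + \F^*\Cobsnoise^{-1}\obs)$ from~\eqref{eqn:Analysis_param_covar}, the observation model $\Expect{\obs|\iparam}{\obs} = \F\iparam$, and the normal-equations identity $\Cparampostmat(\Cparamprior^{-1} + \F^*\Cobsnoise^{-1}\F) = \mat I$; applying $\Predmat$ then gives $\Expect{\obs|\iparam}{\preda^\obs} = \Predmat\iparam = \pred$. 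Since this argument is already packaged inside Theorem~\ref{thm:bayesrisk}, I would simply cite it, and the remaining work is the Gaussian KL formula plus the cancellation. I expect the full write-up to fit comfortably in the appendix alongside the proof of Theorem~\ref{thm:bayesrisk}.
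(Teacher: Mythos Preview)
There is a genuine gap. Your decomposition $\preda^\obs - \predb = (\preda^\obs - \pred) + (\pred - \predb)$ is fine, but the step where you kill the cross term is wrong: the Bayesian posterior mean is \emph{not} a frequentist-unbiased estimator of $\iparam$. From~\eqref{eqn:Analysis_param_covar} and $\Expect{\obs|\iparam}{\obs}=\F\iparam$ you get
\[
\Expect{\obs|\iparam}{\iparama}
= \Cparampostmat\bigl(\Cparamprior^{-1}\iparb + \HMmat\iparam\bigr)
= \iparam - \Cparampostmat\Cparamprior^{-1}(\iparam-\iparb),
\]
which equals $\iparam$ only when $\iparam=\iparb$. (This bias term is exactly the second summand in the bias--variance decomposition~\eqref{eqn:MSE_decomp} you cite from Theorem~\ref{thm:bayesrisk}; that proof does \emph{not} establish $\Expect{\obs|\iparam}{\preda^\obs}=\pred$.) Consequently the cross term does not vanish: after both expectations it equals $-2\,\trace(\Cpredprior^{-1}\Cpredpost)$. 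Your bookkeeping at the end is also inconsistent with your own claims --- with cross term zero and your values for the two squared terms, the non-logarithmic contributions sum to $\trace(\Cpredprior^{-1}\Cpredpost)$, not zero.

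The fix is easy and stays within your framework: keep the decomposition, but compute the cross term honestly. With $\vec{d}=\iparam-\iparb\sim\GM{\vec{0}}{\Cparamprior}$ you find the expected cross term is $-2\,\trace(\Cpredprior^{-1}\Cpredpost)$, so the expected quadratic form becomes $\trace(\Cpredprior^{-1}\Cpredpost)+\Npred-2\,\trace(\Cpredprior^{-1}\Cpredpost)=\Npred-\trace(\Cpredprior^{-1}\Cpredpost)$, and now everything cancels against the deterministic $\trace(\Cpredprior^{-1}\Cpredpost)-\Npred$ from the Gaussian KL formula, leaving only the log-determinant terms. The paper takes a different route: it computes $\Expect{\priorm}{\Expect{\obs|\iparam}{\sqwnorm{\preda^\obs-\predb}{\Cpredprior^{-1}}}}$ directly (no splitting through $\pred$) in terms of $\mat{G}=\Predmat^*\Cpredprior^{-1}\Predmat$, then observes that $\Cparamprior^{1/2}\mat{G}\Cparamprior^{1/2}$ is an orthogonal projector of rank $\Npred$ to force the cancellation. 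Your corrected argument is arguably more elementary and reuses Theorem~\ref{thm:bayesrisk} more directly; the paper's projector argument is slicker but requires the extra lemma.
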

\begin{proof} See Appendix~\ref{appen:On_D_Optimality}. \end{proof}
The significance of Theorem~\ref{thm:expinfogain} is that it says minimizing $\Psi^\GD$ with the
appropriate constraints on $\design$, amounts to maximizing the expected
information gain $\Psi^{\rm KL}$ under the same constraints.
\begin{remark}
Note that in the case of $\Predmat = \mat{I}$, and in the limit as $\Nparam \to \infty$
the criterion~\eqref{eqn:D_optim_optimization_crit} is meaningless. The reason for this
is that in the infinite-dimensional limit $\Cparampost$ is a positive self-adjoint trace class operator
and thus its eigenvalues accumulate at zero. On the other hand
in the case of $\Predmat = \mat{I}$,~\eqref{equ:DKL} simplifies to
\[
\Psi^{\rm KL}(\design) =
-\frac12 \log\det\Cparampost(\design) + \frac12 \log\det\Cparamprior = \frac12
\log\det(\Cparamprior^{1/2} \HMmat \Cparamprior^{1/2} + \mat{I}),
\]
which recovers the known result~\cite{AlexanderianGloorGhattas16} that for a
Gaussian linear Bayesian inverse problem
\begin{equation}\label{equ:DKLI}
	\Expect{\priorm}{ \Expect{\obs|\iparam,\design}{
	\DKL{ \GM{ \wiparama }{ \wCparampostmat } }{  \GM{ \iparprior }{ \Cparampriormat } }
	}}
	=
	\frac12
	\log\det(\Cparamprior^{1/2} \HMmat \Cparamprior^{1/2} + \mat{I}),
\end{equation}
and note that the quantity to the right is well defined in the limit as $\Nparam \to \infty$.
\end{remark}
The previous remark shows that~\eqref{equ:DKLI} is the correct expression for the
expected information gain to
choose in the case of $\Predmat = \mat{I}$.  While the derivation
of~\eqref{equ:DKLI} here is done for the discretized version of the problem, as
shown in~\cite{AlexanderianGloorGhattas16} such an expression for the expected
information gain can be derived in the infinite-dimensional Hilbert space
setting also. 

\subsection{Goal-oriented sensor placement}\label{sec:sensors} The experimental
conditions we choose to control are the sensor locations in the domain at which
data are to be collected. This can be expressed as an optimal design of
experiments (ODE) problem, as we now demonstrate.  Our strategy is to fix an
array of candidate locations for $\Nsens$ sensors and then select an optimal
subset of the candidate sensor locations.  In this context, a design $\design$
is a binary vector where each entry corresponds to whether or not a particular
sensor is active. Practical considerations, such as budgetary or physical
constraints, limit the number of sensors that can be chosen. In this context,
ODE seeks to identify the best possible sensor locations out of the possible
sensor locations.

We consider Bayesian inverse problems with time-dependent linear forward
models.  In this case, $\F$ maps the inversion parameter $\iparam$ to
spatio-temporal observations of the state at the sensor locations and at
observation times. In what follows, we make use of the notation $\F_{0,k}$ for
the forward model that maps $\iparam$ to the equivalent sensor measurements at
observation time instance $\tind{k}$.

In the present formulation, $\design$ enters the Bayesian inverse problem 
through the data likelihood:
\begin{equation}\label{eqn:weighted_joint_likelihood}
		\Prob(\obs | \iparam; \design) \propto
		\exp{\left( - \frac{1}{2} (\F(\iparam) - \obs)\tran\wdesignmat(\F(\iparam) - \obs)  \right) }
\end{equation}
where $\wdesignmat = \designmat^{1/2}\Cobsnoise^{-1} \designmat^{1/2}$, and
$\designmat \in \Rnum^{\Nobs \times \Nobs}$ is a block diagonal
matrix with $\Nobs = \Nsens \nobs$. 
In particular, $\designmat = \mathbf{I}_{\nobs} \otimes \designmat_s$
where $\designmat_s = \diag{w_1,\ldots,w_{\Nsens}}$ and $\otimes$ is the
Kronecker product.  
The noise covariance $\Cobsnoise$ is in general a $\Nobs \times \Nobs$ block diagonal
matrix $\Cobsnoise = \blkdiag{\mat{R}_1, \mat{R}_2, \cdots, \mat{R}_\nobs}$,
where $\mat{R}_k \in \Rnum^{\Nsens \times \Nsens}$ is the spatial noise covariance matrix corresponding to $k$th
observation time, and $\nobs$ is the number of observation time instances. 
In the present work, we assume that observations are uncorrelated in space and time,
and thus $\Cobsnoise$ is a diagonal
matrix. While this assumption is not necessary, it simplifies the formulation
considerably. Since $\designmat$ is also diagonal, we have the convenient
relation 
\[
   \wdesignmat = \designmat^{1/2}\Cobsnoise^{-1} \designmat^{1/2}  =
   \Cobsnoise^{-1/2} \designmat \Cobsnoise^{-1/2}.
\]

The posterior covariance of the parameter $\iparam$ is
\begin{equation}\label{eqn:weighed_prediction_posterior}
		\wCparampostmat = \left[ \wHessmat \right]^{-1}
		= \left( \wHMmat + \Cparampriormat^{-1} \right)^{-1}
		= \left( \F^* \wdesignmat \F + \Cparampriormat^{-1} \right)^{-1}.
\end{equation}
Therefore, the posterior distribution of the goal QoI $\pred$, conditioned by
the observations $\obs$, and the design $\design$
is the Gaussian $\GM{\wpreda}{\wCpredpostmat}$ with
\begin{equation}\label{eqn:weighted_posterior_prediction}
		\begin{aligned}
			\wpreda = \Predmat \wiparama \,,
			\qquad
			\wCpredpostmat = \Predmat \wCparampostmat \Predmat^*
			= \Predmat\, \left( \wHMmat + \Cparampriormat^{-1} \right)^{-1} \, \Predmat^*  \,.
		\end{aligned}
\end{equation}
The above definition of $\wCpredpostmat$ will be
substituted in the GOODE criteria described above.


Identifying the best $k$ sensor locations out of a set of $\Nsens$ candidate sensor 
locations is a combinatorial problem that is computationally intractable even
for modest values of $\Nsens$ and $k$. 
A standard
approach~\cite{papalambros2000principles,HaberHoreshTenorio08,HaberHoreshTenorio10,
HoreshHaberTenorio10,AlexanderianPetraStadlerEtAl14}, which we follow in this
article, is to relax the binary condition on the design weights and let $w_i \in
[0, 1]$, $i = 1, \ldots, \Nsens$. To ensure that only a limited number of
sensors are allowed to be active, we use sparsifying penalty functions to
control the sparsity of the optimal designs.
 
In the present formulation, non-binary weights are difficult to
interpret and implement. Thus, some form of thresholding scheme is required to make
a computed optimal design vector into a binary design.  In this work, we adopt the following
heuristic for thresholding: assuming that only $k$ sensors are to be placed in
the candidate $\Nsens$ locations, as is common practice, the locations
corresponding to highest $k$ weights can be selected. This means that the
corresponding weights are set to $1$ whereas all other sensors are set to $0$,
therefore giving a ``near-optimal'' solution to the original binary
problem~\cite{krause2008near}.  An alternative approach, not considered in this
article, is to obtain binary weights is to successively approximate the
$\ell_0$-norm by employing a sequence of penalty functions yielding a binary
solution~\cite{AlexanderianPetraStadlerEtAl14}.  The issue of sparsification
and the choice of sparsifying penalty function is elaborated further in the 
description of the GOODE optimization problem formulation below and in the
numerical results section. 

\subsection{The optimization problem} \label{subsec:GOODE_optimization}
The generic form of the goal-oriented experimental design problem involves an optimization problem
of the form
\begin{equation} \label{eqn:optim_optimization_vector}
	\begin{aligned}
		& \min_{\design \in \Rnum^{\Nsens}}{ \Psi (\design) + \alpha \, \Phi(\design) }  \\
		\text{subject to}\quad & 0 \leq w_i \leq 1, \quad i = 1, \ldots, {\Nsens} \,,
	\end{aligned}
\end{equation}
where, $\Psi$ is the specific design criterion, $\Phi(\design):
\Rnum_{+}^{\Nsens} \mapsto [0, \infty)$ is a penalty function, and $\alpha > 0$
is a user-defined penalty parameter that controls sparsity of the design.  In
this work, we make a choice to incorporate an $\ell_1$ norm to control sparsity
of the design; that is, we set 
\begin{equation} \Phi(\design):=\norm{\design}_1.
\end{equation} Depending on
whether we want goal-Oriented A- or D-optimality, $\Psi$ is either $\Psi^\GA$
or $\Psi^\GD$, respectively.

The optimization problem is solved using a gradient based approach, and therefore, this
requires the derivation of the gradient. Since the design weights are
restricted to the interval $[0,\,1]$, $\Phi(\design)$ differentiable, and the
gradient  of the penalty term in~\eqref{eqn:optim_optimization_vector} is
$\alpha \, \one$, where $\one \in \mathbb{R}^{\Nsens}$ is a vector of ones. In
the sequel, we derive expressions for the gradient of $\Psi^\GA$ and
$\Psi^\GD$. 
\subsubsection{Gradient of $\Psi^\GA$}
The gradient of $\Psi^\GA$ with respect to the design is given by
\begin{subequations} \label{eqn:A_optim_discrete_vector_gradient}
\begin{equation}
	\nabla_\design \Psi^\GA = - \sum_{k=1}^{\nobs}{
		\sum_{j=1}^{\Npred}{ \vec{\zeta}_{k,j} \odot \vec{\zeta}_{k,j} }
	},
\end{equation}
where  $\odot $ is the pointwise Hadamard product, and
	\begin{equation}\label{eqn:zetakj}
	\vec{\zeta}_{k,j} = \Cobsnoisemat_k^{-\frac{1}{2}} \F_{0,k}  \, \left[\wHessmat\right]^{-1} \Predmat^* \, \vec{e}_i \,.
\end{equation}
\end{subequations}
Here  $\vec{e}_i$ is the $i^{th}$ coordinate vector in
$\Rnum^{\Npred}$, and $\F_{0,k}$ is the forward model that maps the parameter
to the equivalent observation at time instance $\tind{k}$,
$k = 1,2,\ldots,\nobs$. See Appendix~\ref{appen:A_optim_discrete_vector_gradient}
for derivation of this gradient expression.

In the case the forward model is \emph{time-independent}, the expressions simplify to 
\begin{equation} \label{eqn:A_optim_discrete_vector_gradient2}
	\nabla_\design \Psi^\GA = - \sum_{i=1}^{\Npred}{
	\left( \Cobsnoise^{-\frac{1}{2}} \F  \, \left[\wHessmat\right]^{-1} \Predmat^* \, \vec{e}_i \right) \odot
	\left( \Cobsnoise^{-\frac{1}{2}} \F  \, \left[\wHessmat\right]^{-1} \Predmat^* \, \vec{e}_i \right)
	}
	\,,
\end{equation}
where $\vec{e}_i$ is the $i^{th}$ coordinate vector in $\Rnum^{\Npred}$.

\subsubsection{Gradient of $\Psi^\GD$} We present two alternate way of
computing the gradient that are equivalent, but differ in computational cost
depending on the number of sensors $\Nsens$ and 
dimension of the goal QoI, i.e. $\Npred$.
In the first formulation we assume that $\Nsens \geq \Npred$. We can compute the gradient as
\begin{subequations}\label{eqn:D_optim_discrete_vector_gradient_form_1_intxt}
	\begin{equation}\label{eqn:D_optim_discrete_vector_gradient_form_1}
		\nabla_{\design}\left( \Psi^\GD(\design) \right)
		= - \sum_{k=1}^{\nobs}{  \sum_{j=1}^{\Npred}{ \vec{\xi}_{k,j} \odot \vec{\xi}_{k,j} } }  \,,
	\end{equation}
	where
	\begin{equation}\label{eqn:xikj}
		\vec{\xi}_{k,j} = \Cobsnoisemat_k^{-1/2} \F_{0,k}  \,
		\left[\wHessmat\right]^{-1} \Predmat^*  \Cpredpostmat^{-1/2}(\design) \,
		\vec{e}_j,
	\end{equation}
\end{subequations}
and  $\vec{e}_j$ is the $i^{th}$ coordinate vector in
$\Rnum^{\Npred}$. The details are given in
Appendix~\ref{appen:D_optim_discrete_vector_gradient_form1}.  
This formulation is especially suited for large-scale four-dimensional variational (4D-Var) data assimilation applications~ \cite{attia2015hmcsampling,attia2015hmcsmoother,navon2009data}, including weather forecasting, and ocean simulations. 
Note that, in
this formulation, evaluating the gradient requires the square-root of the
posterior covariance matrix $\wCpredpostmat$.

Evaluating~\eqref{eqn:D_optim_discrete_vector_gradient_form_1_intxt}, requires solving $\Npred$ linear systems. 
If $\Nsens < \Npred$, the following alternative formulation
of the gradient of $\Psi^\GD$ will be computationally beneficial:
\begin{subequations}\label{eqn:D_optim_discrete_vector_gradient_form_2_intxt}
	\begin{equation}\label{eqn:D_optim_discrete_vector_gradient_form_2}
		%
                \nabla_{\design} \left( \Psi^\GD(\design ) \right)
                 = - \sum_{k=1}^{\nobs}{ \sum_{i=1}^{\Nsens} {\vec{e}_i \left( \vec{\eta}_{k,i}\tran \Cpredpostmat^{-1}
                   \vec{\eta}_{k,i} \right) } } 
	\end{equation}
where $\vec{\eta}_{k,i}$ is now:
	\begin{equation}\label{eqn:etaki}
		\vec{\eta}_{k,i} = \Predmat\, \left[ \wHessmat \right]^{-1} \F_{k,0}^*  \, \Cobsnoisemat_k^{-1/2}  \vec{e}_i       \,,
	\end{equation}
\end{subequations}
and $\vec{e}_i$ is the $i^{th}$ coordinate vector in $\Rnum^{\Nsens}$. 
The derivation details are given in Appendix~\ref{appen:D_optim_discrete_vector_gradient_form2}.
Note that the two gradient expressions are equivalent, in exact arithmetic.
	
In the \emph{time-independent} setting, where a single vector of sensor measurements is 
available, the gradient expressions for $\Psi^\GD$ simplify as follows.
The formula~\eqref{eqn:D_optim_discrete_vector_gradient_form_1_intxt} reduces to
\begin{subequations}\label{eqn:time_indep_D_grad}
	\begin{equation}
		\nabla_{\design}\left( \Psi^\GD(\design) \right)
		= - \sum_{j=1}^{\Npred}{  \vec{\xi}_{j} \odot \vec{\xi}_{j} }  \, \quad \text{with }
		\quad \vec{\xi}_{j} = \left( \Cobsnoise^{-1/2} \F  \, \left[\wHessmat\right]^{-1} \Predmat^*  \Cpredpostmat^{-1/2}(\design) \, \vec{e}_j \right) \,,
	\end{equation}
	where $\vec{e}_j$ is the $j^{th}$ coordinate vector in $\Rnum^{\Npred}$, 
        and~\eqref{eqn:D_optim_discrete_vector_gradient_form_2_intxt} reduces to
	\begin{equation}
	    \nabla_{\design}\left( \Psi^\GD(\design) \right)
            = - { \sum_{i=1}^{\Nsens} {\vec{e}_i \left( \vec{\eta}_{i}\tran \Cpredpostmat^{-1}
                \vec{\eta}_{i} \right) } }   \, \quad \text{with }
            \quad \vec{\eta}_{i} = \Predmat\, \left[ \wHessmat \right]^{-1} \F^*  \, \Cobsnoise^{-1/2}  \vec{e}_i \,,
        \end{equation}
        where $\vec{e}_i$ is the $i^{th}$ coordinate vector in $\Rnum^{\Nobs}$.
	\end{subequations}

\subsection{Implementation and computational considerations}\label{subsec:GOODE_algorithms}
%
The main bottleneck of solving A-GOODE and D-GOODE
problems lie in the evaluation of the respective objective functions and the associated gradients.
Here, we detail the steps of evaluating the objective function and the
gradient of both A-GOODE and D-GOODE problems.  These steps are explained in
Algorithms~\ref{alg:A_GOODE}, and~\ref{alg:D_GOODE}, respectively.  For
simplicity, we drop the penalty terms in the computations presented in
the two Algorithms~\ref{alg:A_GOODE}, and~\ref{alg:D_GOODE}.

\textbf{The A-GOODE problem}.
Algorithm~\ref{alg:A_GOODE} outlines the main steps of evaluating the
A-GOODE objective
and the
gradient~\eqref{eqn:A_optim_discrete_vector_gradient}.  Notice that all loops
over the end-goal dimension, in Algorithm~\ref{alg:A_GOODE}, i.e., steps
1--3, 8--10, and 16--20, are embarrassingly parallel.  Evaluating $\Predmat^*
\vec{e}_j$, $j=1,\ldots, \Npred$, is independent of the design $\design$, and
can be evaluated offline and saved for later use.  Step 2 requires a Hessian
solve for each of the vectors $\{g_j\}_{j=1,\ldots,\Npred}$, which can be done
using preconditioned conjugate gradient.  Each application of the Hessian
requires a forward and an adjoint PDE solve.  Using the prior covariance as a
preconditioner, this requires $\mathcal{O}(r)$ CG iterations, i.e.,
$\mathcal{O}(2 r)$ PDE solves, where $r$ is the numerical rank of the
prior-preconditioned data misfit Hessian;
see~\cite{bui2013computational,IsaacPetraStadlerEtAl15}.  
Another $\Npred$ forward solutions of the underlying PDE are required in Step
$17$ in the algorithm, for gradient computation.  Therefore, the total number
of PDE solves for objective and gradient evaluation is $\mathcal{O}(\Npred r)$.
This computations can be easily parallelized in $\Npred$ cores.
Moreover, the applications of the inverse
Hessian---the Hessian solves---can be accelerated by using low-rank
approximations of the prior-preconditioned data misfit Hessian~\cite{bui2013computational}.
Note that Algorithm~\eqref{alg:A_GOODE} also requires 
$\Npred$ independent applications of $\Predmat$ and its adjoint, which
can be done in parallel. 

		\begin{algorithm}[htbp!]  \scriptsize
			\renewcommand{\algorithmicrequire}{\textbf{Input:}}
			\renewcommand{\algorithmicensure}{\textbf{Output:}}
			\begin{algorithmic}[1]
				\Require 
                                $\design,\, \Predmat,\, \Hessmat,\, \Npred,\, \nobs,\, {\{\Cobsnoisemat_k\}}_{k=1,\ldots,\nobs},\, { \{ \F_{k-1,\, k} \} }_{k=1,\ldots,\nobs}$  

				\Ensure objective, grad
				\For{ $j = 1,\ldots, \Npred$}
				where $\vec{e}_j$ is the $j^{th}$ coordinate vector in $\Rnum^{\Npred}$
				\State solve $\left[ \Hessmat(\design) \right] \,
        \vec{g}_j = \Predmat^* \vec{e}_j$, for $\vec{g}_j$       \Comment{$\Predmat^*$ is the adjoint of the goal operator $\Predmat$}
				\EndFor
				\State objective $\leftarrow$
				\Call{A\_GOODE\_Objective}{$\design,\, \{\vec{g}_j \}_{j=1, \ldots, \Npred}$}
				\State grad $\leftarrow$
				\Call{A\_GOODE\_Grad}{$\design,\,  \{\vec{g}_j \}_{j=1, \ldots, \Npred}$}


				%
				\Function{A\_GOODE\_Objective}{ $\design,\, \{\vec{g}_j\}_{j=1,\ldots,\Npred}$}
				\State initialize objective = 0
				\For{ $j = 1,\ldots, \Npred$}
				\State objective $\leftarrow$ objective + $\vec{e}^T_j \Predmat \vec{g}_j$   \Comment{the objective function~\eqref{eqn:A_optim_optimization_crit}}
				\EndFor
				\State \Return objective  
				\EndFunction
				\Function{A\_GOODE\_Grad}{$\design,\,  \{\vec{g}_j \}_{j=1, \ldots, \Npred}$ }
				\State initialize grad = $\vec{0} \in \Rnum^{\Npred}$
				\For{ $k = 1,\ldots, \nobs$}
				\For{ $j = 1,\ldots, \Npred$ }
				\State update $\vec{g}_j \leftarrow \F_{k-1,\, k}\, \vec{g}_j $
				\State $\vec{\zeta}_{k,j} \leftarrow \Cobsnoisemat_k^{-1/2}\, \vec{g}_j$
				\State grad $\leftarrow$ grad - $\vec{\zeta}_{k,j} \odot \vec{\zeta}_{k,j}$  \Comment{the gradient~\eqref{eqn:A_optim_discrete_vector_gradient}}
				\EndFor
				\EndFor
				\State \Return grad
				\EndFunction
		\end{algorithmic}
	\caption{A-GOODE objective and gradient computation}
	\label{alg:A_GOODE}
	\end{algorithm}
	  \begin{remark}
      In Algorithm~\ref{alg:A_GOODE}, $\Predmat$ is the goal operator, $\Npred$
      is the dimension of the end-goal, $\nobs$ is the number of observation time
instances, $\design$ is the experimental design, $\Hessmat(\design)$ is the
weighted Hessian, $\Cobsnoisemat_k$ is the covariance of the measurement noise
at time instance $\tind{k}$, and $\F_{k-1,\, k}$ is the forward model that maps
the parameter from time instance $\tind{k-1}$ to the equivalent observation at
observation time instance $\tind{k}$. 
\end{remark}

	%
	%

\textbf{The D-GOODE problem.}
Algorithm~\ref{alg:D_GOODE} describes the main steps for
calculating D-GOODE objective and 
gradient expressions
\eqref{eqn:D_optim_discrete_vector_gradient_form_1_intxt}--\eqref{eqn:D_optim_discrete_vector_gradient_form_2_intxt}.  Similar to
Algorithm~\ref{alg:A_GOODE}, all loops over the end-goal dimension, 
i.e., steps 1--3, 14--22, and 16--20 in Algorithm~\ref{alg:D_GOODE}, are
inherently parallel. Also, the loop over the observation dimension,
$i=1,\ldots, \Nsens$, in Algorithm~\ref{alg:D_GOODE} is inherently parallel.  
With small end-goal space, the cost of Cholesky factorization of
$\Cpredpostmat$, in step $5$ is negligible compared to PDE solutions.  
The first form of the gradient, i.e.  steps 12--24, requires $\Npred$ Hessian
solves, and $\Npred$ forward PDE solves that can run completely in parallel.
The second form of the gradient, i.e. steps 25--41, on the other hand requires
$\Nsens$  Hessian solves, and $\Nsens$ forward and adjoint solves of the
underlying system of PDEs.  As mentioned before, low-rank approximation
of the prior-preconditioned data misfit Hessian can be used to accelerate
computations.  Checkpointing is utilized in the second form of the gradient,
i.e., steps 28--31.  The checkpointed solutions $\vec{r}_{k,i}$ are recalled in the
adjoint solves in step 33.  

Algorithm~\ref{alg:D_GOODE} requires $\Npred$ applications of $\Predmat$ and
its adjoint for objective function evaluations. As for the gradient, we need
$\Npred$ applications of $\Predmat^*$ with the first form of the gradient, and
$\nobs \Nsens$  applications of $\Predmat$. As before, 
the loops over end-goal and observation dimensions are embarrassingly 
parallel. 


\begin{algorithm}[htpb!] \scriptsize
	\renewcommand{\algorithmicrequire}{\textbf{Input:}}
	\renewcommand{\algorithmicensure}{\textbf{Output:}}
	\begin{algorithmic}[1]
	\Require $\design,\, \Predmat,\, \Hessmat,\, \Npred,\, \nobs,\, \Nsens,\, {\{\Cobsnoisemat_k\}}_{k=1,\ldots,\nobs},\, { \{ \F_{k-1,\, k} \} }_{k=1,\ldots,\nobs}$  

\Comment{
In addition to the arguments in
Algorithm~\ref{alg:A_GOODE}, here $\Nsens$ is
the number candidate sensor locations.} 
	\Ensure objective, grad
		\For{ $j = 1,\ldots, \Npred$}
			where $\vec{e}_j$ is the $j^{th}$ coordinate vector in $\Rnum^{\Npred}$
			\State solve $\left[ \Hessmat(\design) \right] \,
			\vec{g}_j = \Predmat^* \vec{e}_j$, for $\vec{g}_j$
			\State $ \vec{d}_j \leftarrow \, \Predmat \vec{g}_j$
		\EndFor
		\State calculate $\mat{L}$, the Cholesky factorization of $\Cpredpostmat $
		\State objective $\leftarrow\, 2\, \sum_{j=1}^{\Npred}{ \log{\left( {\left[ \mat{L}\right]_{jj}} \right)} }$    \Comment{the objective function~\eqref{eqn:D_optim_optimization_crit}}

		\If {$\Nsens<\Npred$}
			\State grad $\leftarrow$ \Call{D\_GOODE\_Grad\_1}{$\design,\, \mat{L}$}
		\Else
			\State grad $\leftarrow$ \Call{D\_GOODE\_Grad\_2}{$\design,\, \Cpredpostmat$}
		\EndIf

		\vspace{0.25cm}
		\SubAlg{Two forms of the function \textrm{D\_GOODE\_Grad} }
		\Function{D\_GOODE\_Grad\_1}{$ \design$,  $\mat{L}$ }  \Comment{First form~\eqref{eqn:D_optim_discrete_vector_gradient_form_1_intxt}}
			\State initialize grad = $\vec{0} \in \Rnum^{\Npred}$
			\For{ $j = 1,\ldots, \Npred$ }
				\State solve $\mat{L} \,\vec{r}_j = \vec{e}_j$, for $\vec{r}_j$
				\State solve $\left[ \Hessmat(\design) \right] \, \vec{q}_j = \Predmat^* \vec{r}_j$, for $\vec{q}_j$
				\For{ $k = 1,\ldots, \nobs$ }
				\State update $\vec{q}_j \leftarrow \F_{k-1,\, k}\, \vec{q}_j $
				\State $\vec{\xi}_{k,j} \leftarrow \Cobsnoisemat_k^{-1/2}\, \vec{q}_j$
				\State grad $\leftarrow$ grad - $\vec{\xi}_{k,j} \odot \vec{\xi}_{k,j} $
				\EndFor
				\EndFor
				\State \Return grad
			\EndFunction
				\vspace{0.15cm}
		\Function{D\_GOODE\_Grad\_2}{$ \design$,  $\Cpredpostmat$ }  \Comment{Second form~\eqref{eqn:D_optim_discrete_vector_gradient_form_2_intxt}}
				\State initialize grad = $\vec{0} \in \Rnum^{\Npred}$
				\For{ $i = 1,\ldots, \Nsens$ }
					\State calculate $\vec{r}_{1, i} \leftarrow \Cobsnoisemat_1^{-1/2} \vec{e}_i$
					\For{ $k = 2,\ldots, \nobs$ }
					\State calculate $\vec{r}_{k, i} \leftarrow \F_{k-1, k}\, \vec{r}_{k-1, i}$
						\hspace*{6em}%
								\rlap{\smash{$\left.\begin{array}{@{}c@{}}\\{}\\{}\\{}\end{array}\right\}%
														\begin{tabular}{l}  \end{tabular}$}}  \Comment{checkpointing}
					\EndFor
					\For{ $k = \nobs, \nobs-1, \ldots, 1$ }
						\State calculate $\vec{q}_{k,i} \leftarrow \F^*_{k,\, k-1}\, \vec{r}_{k,i}$   \Comment{$\F^*_{k,\, k-1}$ is the adjoint of the forward operator $\F_{k-1,\, k}$}
						\State solve $\left[ \Hessmat(\design) \right] \, \vec{\eta}_{k,i} = \vec{q}_{k, i}$, for $\vec{\eta}_{k,i}$
						\State update $\vec{\eta}_{k,i} \leftarrow \Predmat\, \vec{\eta}_{k,i}$
						\State solve $\Cpredpostmat \, \vec{\nu}_{k,i} = \vec{\eta}_{k,i}$, for $\vec{\nu}_{k,i}$
						\State grad $\leftarrow$ grad - $\left( \vec{\eta}_{k,i}^T \, \vec{\nu}_{k,i} \right)\, \vec{e}_i$
					\EndFor
				\EndFor
				\State \Return grad
		\EndFunction
		\end{algorithmic}
		\caption{D-GOODE objective and gradient computation}
		\label{alg:D_GOODE}
\end{algorithm}


\subsection{Connections to classical ODE criteria and extensions}\label{subsec:ext}
Here we further discuss the connections of the GOODE criteria to the
corresponding classical experimental design criteria and an extension to 
nonlinear goal operators.  
We have already mentioned two important connections:
\begin{enumerate}
\item Taking $\Predmat = \mat{I}$ trivially recovers the Bayesian A- and D-optimal
design criteria.  
\item In the special case $\Predmat$ is a row vector
$\Predmat = \vec{c}\tran$, we get $\Psi^\GA(\design) =
\vec{c}\tran{\Cparampostmat} \vec{c}$, which is the Bayesian C-optimality
criterion.
\end{enumerate}
Furthermore, if the vector $\vec{c}$ is randomly drawn from a distribution
$\pi$ with mean zero and identity covariance, then $\Expect{\pi}{\vec{c}\tran \Cparampostmat \vec{c}} 
= \trace(\Cparampostmat)$; that is, in
expectation, it is nothing but the classical Bayesian A-optimality. Thus, if
$\vec{c}$ is a single draw from $\pi$, then the scalar GOODE is an unbiased
trace estimator~\cite{AvronToledo11}.

Next, we discuss possible extensions to the case of nonlinear
end-goal operators. 
Let $\pred = \vec{p}(\iparam)$ denote a
nonlinear parameter-to-goal map. Assuming $\vec{p}(\iparam)$ is a
differentiable function of the parameter $\iparam$, one can consider a
linearization: 
\[
\vec{p}(\iparam)
\approx \vec{p}(\iparam_0) + \Predmat(\iparam_0)(\iparam - \iparam_0),
\]
where $\iparam_0$ is a reference (nominal) parameter value, and
$\Predmat(\iparam_0)$ is the Fr\'echet derivative of $\vec{p}(\iparam)$,
evaluated at $\iparam_0$. The linearization point $\iparam_0$ can be chosen,
for instance, by taking the mean of the prior distribution $\iparprior$.  An
alternative choice for the linearization point is the MAP-estimator
$\iparpost(\design)$. This has the advantage of incorporating the inverse problem
solution in the GOODE problem, but presents an added challenge: 
%
the data $\obs$ needed to compute $\iparpost(\design)$ is unavailable \textit{a
priori}. 
An approach to tackle this issue has been described
in~\cite{AlexanderianPetraStadlerEtAl16}, in the context of A-optimal design of
experiments for nonlinear inverse problems; the development of a similar
strategy for GOODE problems with nonlinear goal operators is subject
of future work.  


	\section{Model problem and experimental setup} \label{sec:Experiment_Setup}
	In this section, we detail a model Bayesian inverse problem, which we
	use to illustrate the criteria and the algorithms proposed in the
	present work.  The model problem is taken to be a contaminant source
	identification problem in which the spatio-temporal measurements of the
	contaminant field at sensor locations are used to estimate the source,
	or initial conditions, of the contaminant
	field~\cite{AkcelikBirosDraganescuEtAl05,FlathWilcoxAkcelikEtAl11,PetraStadler11}. 

	\paragraph{The forward operator}
	The governing equation of the contaminant field $u = \xcont(\mathbf{x}, t)$ is assumed to be the following advection-diffusion equation with associated boundary conditions:
	\begin{equation}\label{eqn:advection_diffusion}
		\begin{aligned}
			\xcont_t - \kappa \Delta \xcont + \vec{v} \cdot \nabla \xcont &= 0     \quad \text{in } \domain \times [0,T],   \\
			\xcont(0,\,x) &= \theta \quad \text{in } \domain,               \\
			\kappa \nabla \xcont \cdot \vec{n} &= 0     \quad \text{on } \partial \domain \times [0,T],
		\end{aligned}
\end{equation}
where $\kappa>0$ is the diffusivity, $T$ is the final time, and
$\vec{v}$ is the velocity field. This models the transport of the contaminant
field in the domain.  The domain $\mathcal{D}$, which is depicted in
Figure~\ref{fig:ad_diff}~(left), is the region $(0, 1) \times (0, 1)$ with the
shown rectangular regions in its interior excluded. 
These regions model buildings, in which the
contaminant does not enter. The boundary $\partial \domain$ includes both the
external boundary and the building walls.  The velocity field $\vec{v}$, shown
in Figure~\ref{fig:ad_diff}~(right), is obtained by solving a steady
Navier-Stokes equation as detailed in~\cite{PetraStadler11,
VillaPetraGhattas2016}.
	
\begin{figure}\centering
  \includegraphics[width=0.45\linewidth]{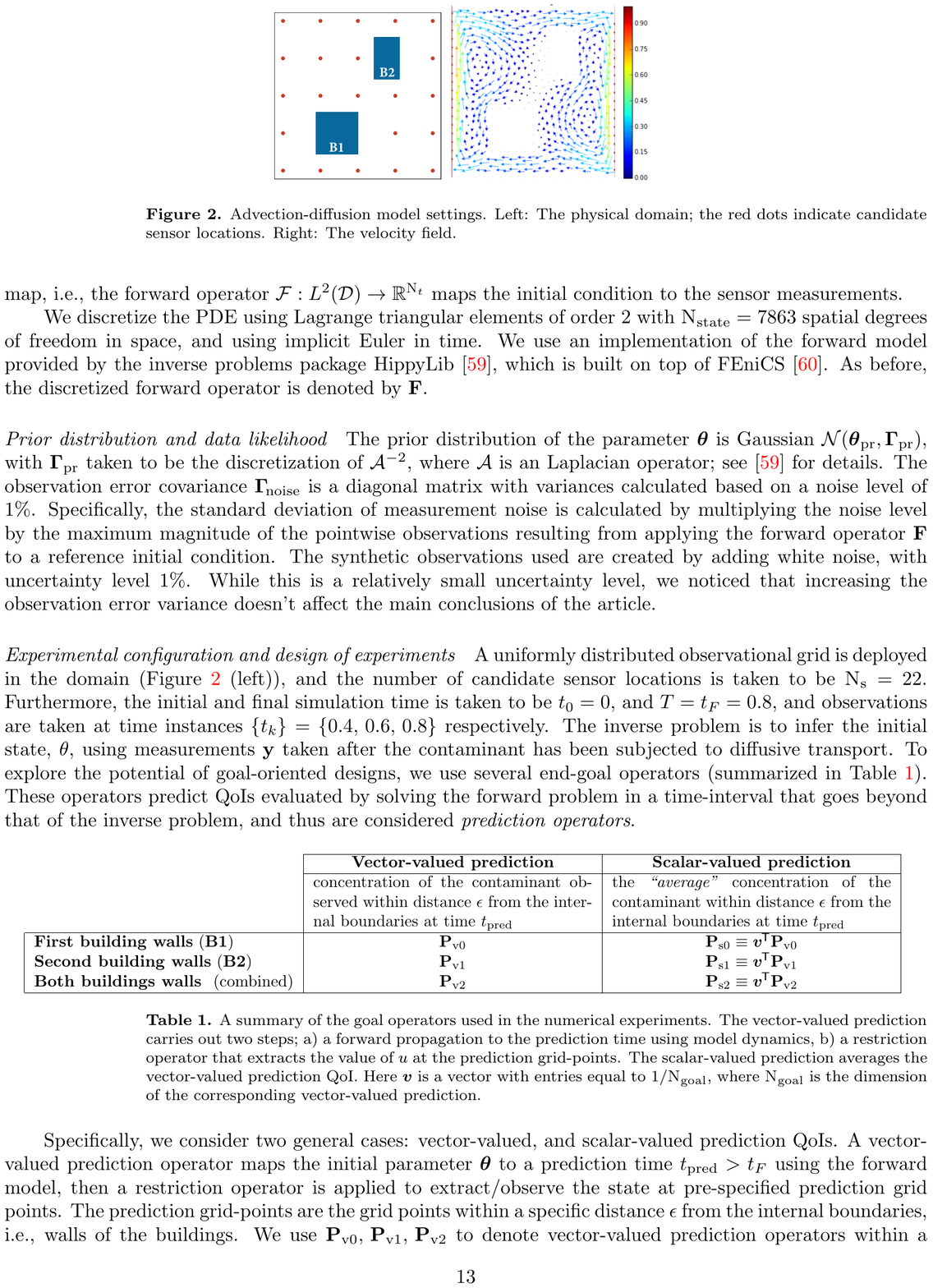}
\caption{Advection-diffusion model settings.  Left: The physical domain; the
red dots indicate candidate sensor locations.  Right:  The velocity field.}
\label{fig:ad_diff} \end{figure}

To evaluate the forward operator,
we solve~\eqref{eqn:advection_diffusion}, and then apply
a restriction operator (observation operator) $\mathcal{B}$ to the solution $u(\vec{x}, t)$ to extract
solution values at a set of predefined (sensor) locations $\{ \vec{x}_1,\,
\vec{x}_2,\, \ldots,\,  \vec{x}_{\nobs} \} \subset \domain$, at fixed time
instances $\{ t_1,\,  t_2,\, \ldots,\,   t_{\nobs} \} \subset [0, T]$.   The
parameter-to-observable map, i.e., the forward operator $\mathcal{F}
:L^2(\domain) \rightarrow \mathbb{R}^{\nobs}$ maps the initial condition to the sensor measurements.

We discretize the PDE using
Lagrange triangular elements of order $2$ with $\Nstate=7863$ spatial
degrees of freedom in space, and using implicit Euler in time.
We use an implementation of the forward model provided by
the inverse problems package HippyLib~\cite{VillaPetraGhattas2016}, which is
built on top of FEniCS~\cite{logg2012automated}. As before, the discretized forward
operator is denoted by $\F$.

\paragraph{Prior distribution and data likelihood} The prior
  distribution of the parameter $\iparam$ is Gaussian
  $\GM{\iparb}{\Cparampriormat}$, with $\Cparampriormat$ taken to be the
  discretization of $\mathcal{A}^{-2}$, where $\mathcal{A}$ is an Laplacian
  operator; see \cite{VillaPetraGhattas2016} for details.  The observation error
  covariance $\Cobsnoise$ is a diagonal matrix with variances calculated based on
  a noise level of $1\%$.
  Specifically, the standard deviation of measurement noise is
  calculated by multiplying the noise level by the maximum
  magnitude of the pointwise observations resulting
  from applying the forward operator $\F$ to a reference initial condition.  The
  synthetic observations used are created by adding white noise, with uncertainty
  level $1\%$. While this is a relatively small uncertainty level, we noticed
  that increasing the observation error variance doesn't affect the main
  conclusions of the article.

\paragraph{Experimental configuration and design of experiments} 
	A uniformly distributed observational grid is deployed in the domain
	(Figure~\ref{fig:ad_diff}~(left)), and the number of candidate sensor
	locations is taken to be $\Nsens = 22$.
	Furthermore, the initial and final simulation time is taken to be $t_0=0,$ and $T=t_F=0.8$, and observations are taken at time instances
	$\{t_k\}=\{0.4,\,0.6,\, 0.8\}$ respectively.
	The inverse problem is to infer the initial state, $\theta$,
	using measurements $\obs$ taken after the contaminant has been
	subjected to diffusive transport. To explore the potential of goal-oriented designs, we use several
  end-goal operators (summarized in Table~\ref{table:prediction_operators}). 
  These operators predict QoIs evaluated by solving the forward problem in a time-interval that goes
  beyond that of the inverse problem, and thus are considered \emph{prediction operators}.
	\begin{table}[!ht]
		\centering
		\resizebox{0.95\columnwidth}{!}{%
		\begin{tabular}{|l | p{6cm} | p{6cm}|}
			\cline{2-3}
			\multicolumn{1}{c|}{}  & \multicolumn{1}{c|}{\bf Vector-valued prediction}  & \multicolumn{1}{c|}{\bf Scalar-valued prediction} \\ \cline{2-3}
			\multicolumn{1}{c|}{}  & concentration of the contaminant observed within distance $\epsilon$ from the internal boundaries at time $\tind{pred}$& the \textit{``average''} concentration of the contaminant within distance $\epsilon$ from the internal boundaries at time $\tind{pred}$ \\ \hline
			{\bf First building walls} ($\mathbf{B1}$)   & \multicolumn{1}{c|}{$\Predmat_{\rm v0}$ }  & \multicolumn{1}{c|}{$\Predmat_{\rm s0} \equiv \vec{v}\tran \Predmat_{\rm v0}$ }  \\
			{\bf Second building  walls} ($\mathbf{B2}$) & \multicolumn{1}{c|}{$\Predmat_{\rm v1}$ }  & \multicolumn{1}{c|}{$\Predmat_{\rm s1} \equiv \vec{v}\tran \Predmat_{\rm v1}$ }  \\
			{\bf Both buildings walls } (combined)          & \multicolumn{1}{c|}{$\Predmat_{\rm v2}$ }  & \multicolumn{1}{c|}{$\Predmat_{\rm s2} \equiv \vec{v}\tran \Predmat_{\rm v2}$ }  \\ \hline
			\end{tabular}%
			}
      \caption{A summary of the goal operators used in
			the numerical experiments.  The vector-valued prediction carries
			out two steps; a) a forward propagation to the
			prediction time using model dynamics, b) a restriction
			operator that extracts the value of $\xcont$ at the
			prediction grid-points. The scalar-valued prediction
			averages the vector-valued prediction QoI. Here
			$\vec{v}$ is a vector with entries equal to $1 /
			\Npred$, where $\Npred$ is the dimension of the
			corresponding vector-valued prediction.  }

		\label{table:prediction_operators}
	\end{table}

	Specifically, we consider two general cases: vector-valued, and scalar-valued
	prediction QoIs.  A vector-valued prediction operator maps
	the initial parameter $\iparam$ to a prediction time $\tind{pred} > t_F$ using
	the forward model, then a restriction operator is applied to extract/observe
	the state at pre-specified prediction grid points.  The prediction grid-points
	are the grid points within a specific distance $\epsilon$ from the internal
	boundaries, i.e., walls of the buildings.  We use $\Predmat_{\rm v0},\,
	\Predmat_{\rm v1},\, \Predmat_{\rm v2}$ to denote vector-valued prediction
	operators within a distance $\epsilon$ from the boundary of first building $\mathbf{B1}$,
	second building  $\mathbf{B2}$, and the two buildings combined, respectively, as shown in Figure~\ref{fig:ad_diff}. 
	The scalar-valued prediction operators compute the average of
	the predictions arising out of the corresponding prediction operators; that is,
	\begin{equation}
		\Predmat_{\rm si} := \vec{v}\tran \Predmat_{\rm vi} , \quad i = 0,\,1,\,2\,,
	\end{equation}
	where $\vec{v} = 1/\Npred \begin{bmatrix} 1 & 1 & \cdots & 1\end{bmatrix}\tran \in \Rnum^{\Npred}$.
	Table~\ref{table:prediction_dimensions} summarizes, the prediction time
	$\tind{pred}$, the prediction distance $\epsilon$ from the corresponding
	internal boundary, and the dimension of the range of all vector-valued
	prediction operators tested herein.
	\begin{table}[]
		\centering
		\resizebox{0.35\linewidth}{!}{%
		\begin{tabular}{|l | p{1cm} | p{1cm} | p{1cm}|}
			\hline
			\multicolumn{1}{|c|}{\bf Prediction operator}  & \multicolumn{1}{c|}{$\tind{pred}$}  & \multicolumn{1}{c|}{$\epsilon$}  & \multicolumn{1}{c|}{$\Npred$} \\ \hline
			\multicolumn{1}{|c|}{$\Predmat_{\rm v0}$ }   & \multicolumn{1}{c|}{$1.0$}  & \multicolumn{1}{c|}{$0.02$}  &  \multicolumn{1}{c|}{$164$} \\
			\multicolumn{1}{|c|}{$\Predmat_{\rm v1}$ }   & \multicolumn{1}{c|}{$1.0$}  & \multicolumn{1}{c|}{$0.02$}  &  \multicolumn{1}{c|}{$138$}  \\
			\multicolumn{1}{|c|}{$\Predmat_{\rm v2}$ }   & \multicolumn{1}{c|}{$1.0$}  & \multicolumn{1}{c|}{$0.02$}  &  \multicolumn{1}{c|}{$302$} \\ \hline
			\end{tabular}%
			}
		\caption{Details of the prediction vector-valued operators used in the numerical experiments, and described in Table~\ref{table:prediction_operators}.
		The prediction time $\tind{pred} $, the prediction distance $\epsilon$ from the corresponding internal boundary, and the dimension of the range of all vector prediction operators used are shown.
		}
		\label{table:prediction_dimensions}
	\end{table}

	\paragraph{Sparsification strategy and optimization solver}
	As mentioned in Section \ref{subsec:GOODE_optimization}, we use an $\ell_1$-norm penalty to control the
	sparsity of the design. The design penalty parameter
	$\alpha$ is tuned empirically, as discussed in the numerical results below.
	The optimization problem~\eqref{eqn:optim_optimization_vector} is solved using a
	limited-memory quasi-Newton algorithm for bounded constrained optimization~\cite{zhu1997algorithm}.
	The optimization algorithm approximates second	derivatives using the Broyden-Fletcher-Goldfarb-Shanno (BFGS) method; see
	e.g.,~\cite{byrd1995limited,morales2011remark}.

	\section{Numerical Results}\label{sec:Numerical_Results}
	This section summarizes the numerical experiments carried out using the
	settings described in Section~\ref{sec:Experiment_Setup}.  In
	Section~\ref{subsec:scalar_goal_results}, we provide a numerical study of the
	GOODE problem with scalar-valued prediction QoI.
	Section~\ref{subsec:vector_goal_results} provides numerical experiments of the
	GOODE problem with vector-valued prediction QoI.  In Section~\ref{subsec:reg_parameter},
	we present an empirical study on the choice of the penalty parameter
	$\alpha$.

	\subsection{Scalar-valued goal}\label{subsec:scalar_goal_results}
	%
	This section contains numerical experiments of the GOODE problem with scalar prediction QoI, defined as the average contaminant within distance $\epsilon$ (see Table~\ref{table:prediction_dimensions}) from the building(s) boundaries. Note that for a scalar QoI, the A- and D-GOODE criteria are identical.

Figure~\ref{fig:scalar_GOOED_A_Optimal_weights_3d} shows the GOODE optimal
design, for several choices of the penalty parameter $\alpha$.  The optimal
weights $\{w_i\}_{i=1,\ldots, \Nsens}$ of the $\Nsens=22$ candidate sensor
locations are plotted on the z-axis, where the weights are normalized to sum
to $1$, after solving the optimization problem~\eqref{eqn:optim_optimization_vector}.
	\begin{figure}
		\centering
    \includegraphics[width=0.73\linewidth]{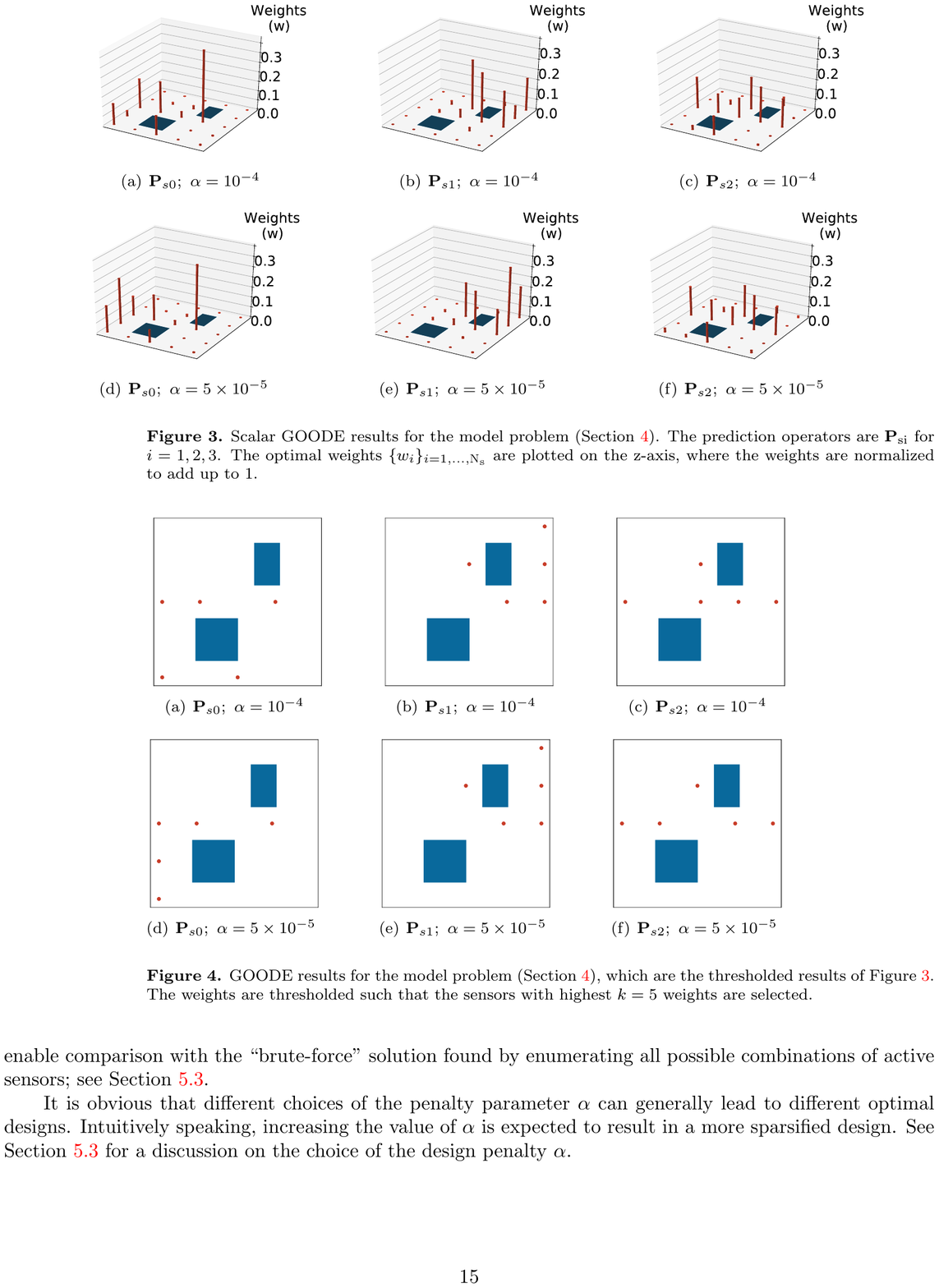}
		\caption{Scalar GOODE results for the model problem (Section~\ref{sec:Experiment_Setup}). The prediction operators are $\Predmat_{\rm si} $ for $i=1,2,3$.
		The optimal weights $\{w_i\}_{i=1,\ldots, \Nsens}$ are plotted on the z-axis, where the weights are normalized to add up to $1$.
		}
	\label{fig:scalar_GOOED_A_Optimal_weights_3d}
	\end{figure}
	%
	%

The results in Figure~\ref{fig:scalar_GOOED_A_Optimal_weights_3d},
show the utility of incorporating the end-goal in the solution of the
sensor placement problem.
Specifically, since the goal defined by $\Predmat_{\rm s0}$ is to predict the
average concentration of the contaminant at time $t_\pred$ around the first building,
the goal-oriented optimal design
(Figures 3(a), 3(d)) 
is a sparse solution with
highest weights concentrated around the first building where the prediction
information is maximized.  Similarly, the optimal relaxed design for solving
the GOODE problem with prediction operator $\Predmat_{\rm s1}$, is a design
with high weights centered around the second building 
(see Figures 3(b), 3(e)). 
In the last case where the
prediction operator $\Predmat_{\rm s2}$ is used, the goal is to predict the
average concentration of the contaminant around the two buildings combined.
We found that the designs for the goal that included both buildings, had a strong overlap with the designs obtained by considering the goals (individual buildings) separately.

As mentioned in Section~\ref{sec:sensors}, the sensors corresponding to the $k$
largest weights are set to $1$, the other sensor weights are set to zero. We
now present the thresholded solution Figure~\ref{fig:scalar_GOOED_A_Optimal_weights_upper_quartile}, obtained by thresholding the solution in Figure~\ref{fig:scalar_GOOED_A_Optimal_weights_3d}, with $k=5$. The number of sensors $k=5$ is deliberately chosen to be a small number, to enable comparison with  the
``brute-force'' solution found by enumerating all possible combinations of
active sensors; see Section~\ref{subsec:reg_parameter}.

	\begin{figure}
		\centering
    \includegraphics[width=0.62\linewidth]{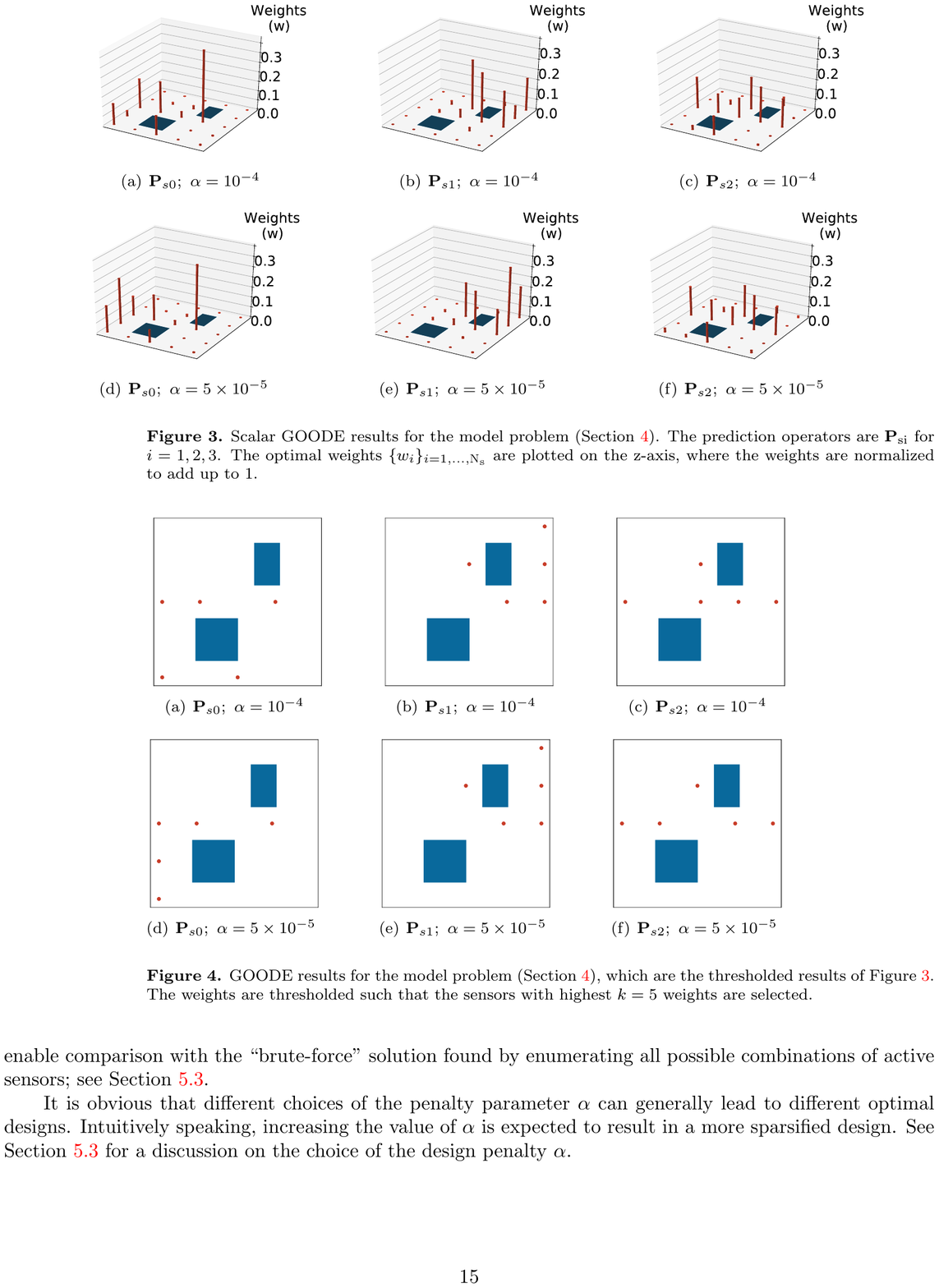}
		\caption{GOODE results for the model problem (Section~\ref{sec:Experiment_Setup}), which are the thresholded results of Figure~\ref{fig:scalar_GOOED_A_Optimal_weights_3d}.
		The weights are thresholded such that the sensors with highest $k=5$ weights are selected.
		}
		\label{fig:scalar_GOOED_A_Optimal_weights_upper_quartile}
	\end{figure}

	It is obvious that different choices of the penalty parameter $\alpha$ can generally lead to different optimal designs.
	Intuitively speaking, increasing the value of $\alpha$ is expected to result in a more sparsified design.
	See Section~\ref{subsec:reg_parameter} for a discussion on the choice of the design penalty $\alpha$.

	\subsection{Vector-valued goal}\label{subsec:vector_goal_results}
This section presents numerical experiments of the A-GOODE and D-GOODE
problems with vector-valued prediction QoI defined as the concentration of the
contaminant within distance $\epsilon$ (see
Table~\ref{table:prediction_dimensions}) from the building(s) boundaries.

	%
	\subsubsection{Goal-oriented A-optimality}
	%
	
	%
	%
	\begin{figure} \centering
		%
    \includegraphics[width=0.73\linewidth]{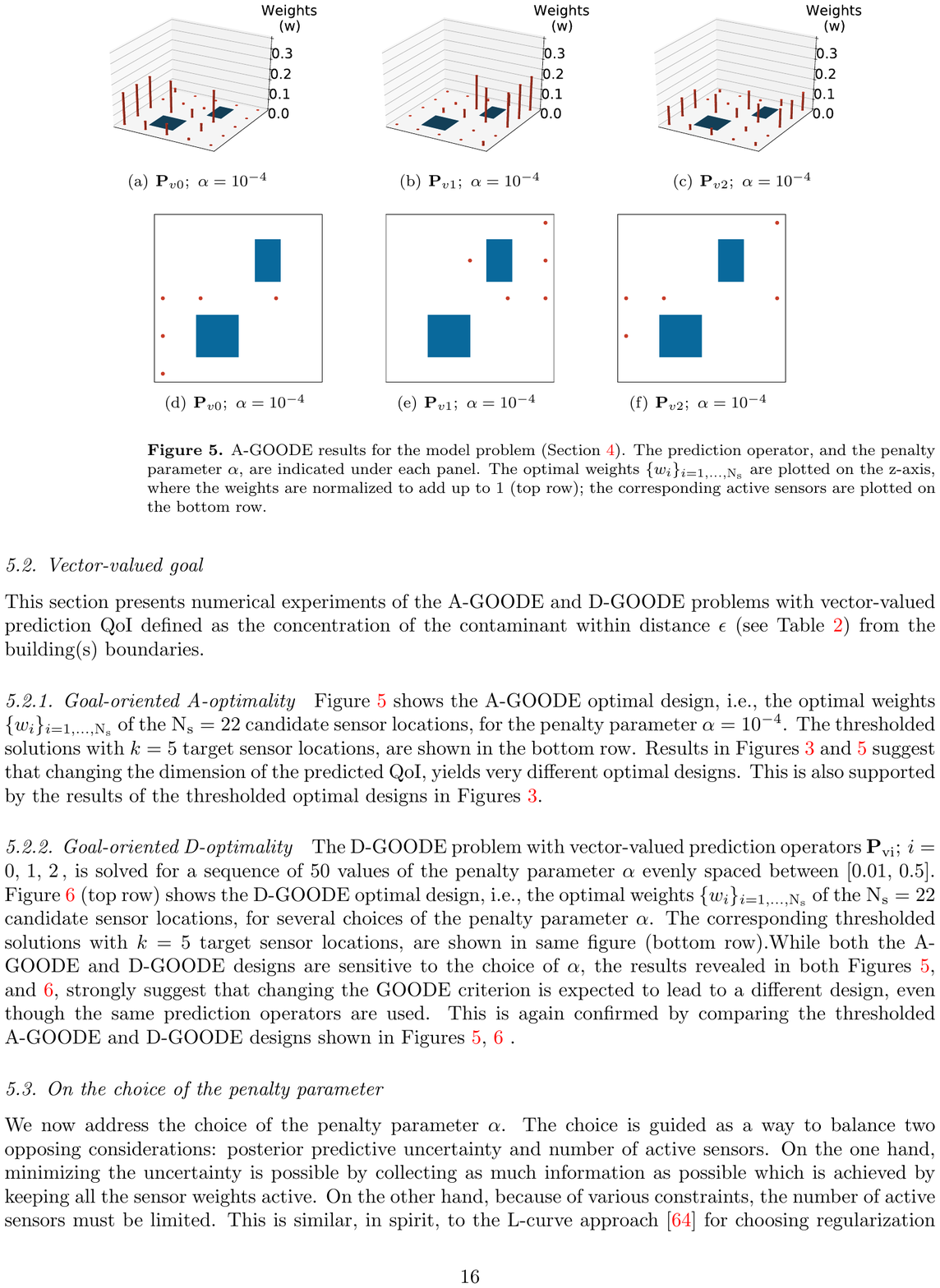}

		\caption{A-GOODE results for the model problem
(Section~\ref{sec:Experiment_Setup}).  
The prediction
operator, and the penalty parameter $\alpha$, are indicated under each panel.
The optimal weights $\{w_i\}_{i=1,\ldots, \Nsens}$ are plotted on the z-axis,
where the weights are normalized to add up to $1$ (top row); the corresponding active sensors are plotted on the bottom row.  }
\label{fig:vector_GOOED_A_Optimal_weights_3d} \end{figure}

Figure~\ref{fig:vector_GOOED_A_Optimal_weights_3d} shows the A-GOODE optimal
design, i.e., the optimal weights $\{w_i\}_{i=1,\ldots, \Nsens}$ of the
$\Nsens=22$ candidate sensor locations, for the penalty
parameter $\alpha = 10^{-4}$.  The thresholded solutions with $k=5$ target sensor
locations, are shown in the bottom row.  Results in
Figures~\ref{fig:scalar_GOOED_A_Optimal_weights_3d}
and~\ref{fig:vector_GOOED_A_Optimal_weights_3d} suggest that changing the
dimension of the predicted QoI, yields very different optimal designs.  This is
also supported by the results of the thresholded optimal designs in
Figures~\ref{fig:scalar_GOOED_A_Optimal_weights_3d}.
	\subsubsection{Goal-oriented D-optimality}
	%
	The D-GOODE problem with vector-valued prediction operators $\Predmat_{\rm vi};\, i = 0,\,1,\,2\,$, is solved for a sequence of $50$ values of the penalty parameter $\alpha$ evenly spaced between $[0.01,\,  0.5]$.
	Figure~\ref{fig:vector_GOOED_D_Optimal_weights_3d} (top row) shows the D-GOODE optimal design, i.e., the optimal weights $\{w_i\}_{i=1,\ldots, \Nsens}$ of the $\Nsens=22$ candidate sensor locations,
	for several choices of the penalty parameter $\alpha$. The corresponding thresholded solutions with $k=5$ target sensor locations, are shown in same figure (bottom row).
	While both the A-GOODE and D-GOODE designs are sensitive to the choice of $\alpha$, the results revealed in both Figures~\ref{fig:vector_GOOED_A_Optimal_weights_3d}, and~\ref{fig:vector_GOOED_D_Optimal_weights_3d}, strongly suggest that changing the GOODE criterion is expected to lead to a different design, even though the same prediction operators are used. This is again confirmed by comparing the thresholded A-GOODE and D-GOODE designs shown in Figures~\ref{fig:vector_GOOED_A_Optimal_weights_3d},~\ref{fig:vector_GOOED_D_Optimal_weights_3d} .

	\begin{figure}[!ht]
		\centering
    \includegraphics[width=0.73\linewidth]{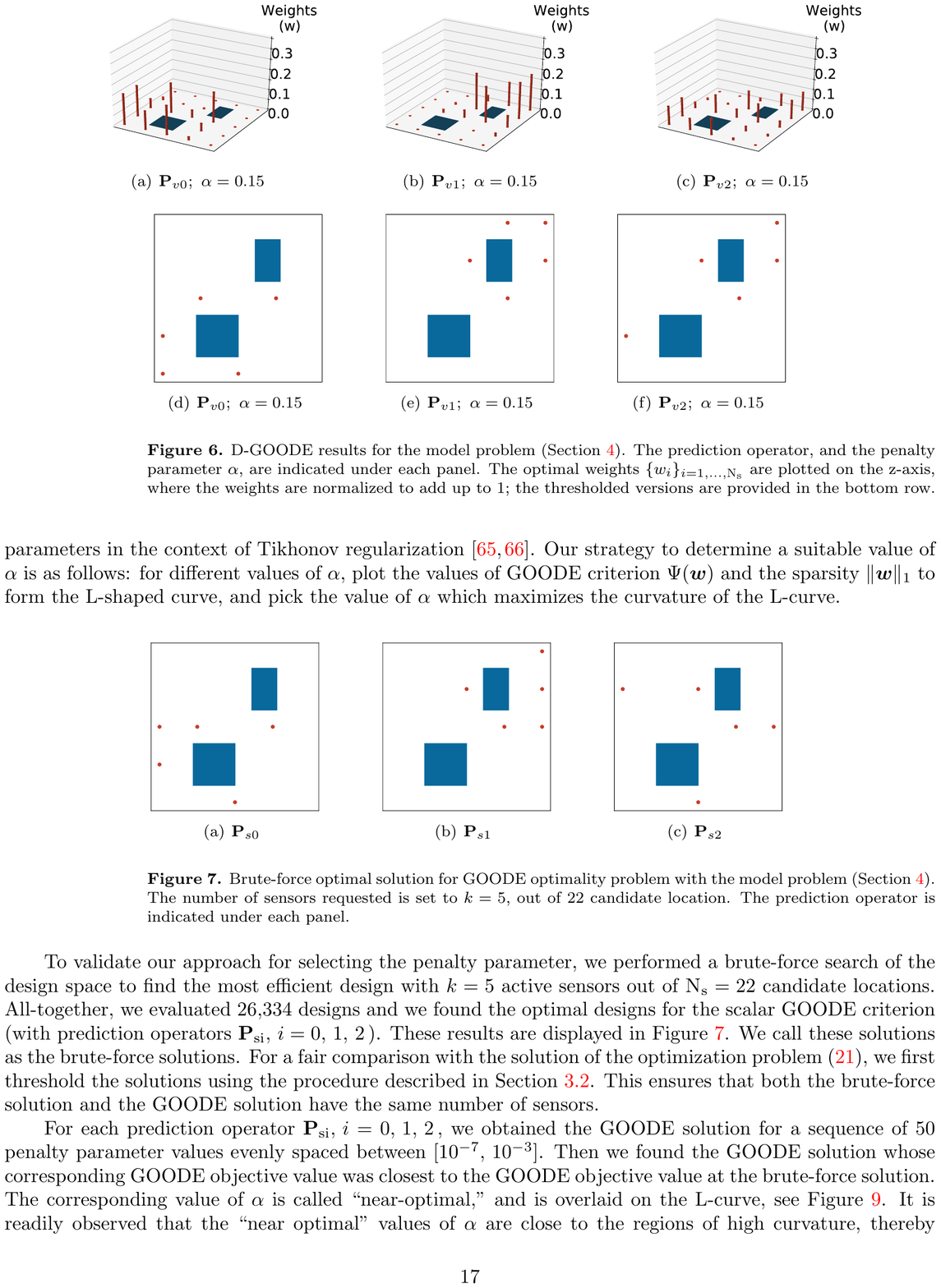}
		\caption{D-GOODE results for the model problem (Section~\ref{sec:Experiment_Setup}).
		The prediction operator, and the penalty parameter $\alpha$, are indicated under each panel.
		The optimal weights $\{w_i\}_{i=1,\ldots, \Nsens}$ are plotted on the z-axis, where the weights are normalized to add up to $1$; the thresholded versions are provided in the bottom row.
		}
		\label{fig:vector_GOOED_D_Optimal_weights_3d}
	\end{figure}
	%
\subsection{On the choice of the penalty parameter}\label{subsec:reg_parameter}


We now address the choice of the penalty parameter $\alpha$. The choice is
guided as a way to balance two opposing considerations: posterior predictive
uncertainty and number of active sensors. On the one hand, minimizing the
uncertainty is possible by collecting as much information as possible which is
achieved by keeping all the sensor weights active. On the other hand, because
of various constraints, the number of active sensors must be limited. This is
similar, in spirit, to the L-curve approach~\cite{hansen1999curve} for choosing
regularization parameters in the context of Tikhonov
regularization~\cite{phillips1962technique,tikhonov1963solution}. Our strategy
to determine a suitable value of $\alpha$ is as follows: for different values
of $\alpha$, plot the values of GOODE criterion $\Psi(\design)$ and the
sparsity $\|\design\|_1$ to form the L-shaped curve, and pick the value of
$\alpha$ which maximizes the curvature of the L-curve.

\begin{figure}[!ht]
	\centering
  \includegraphics[width=0.62\linewidth]{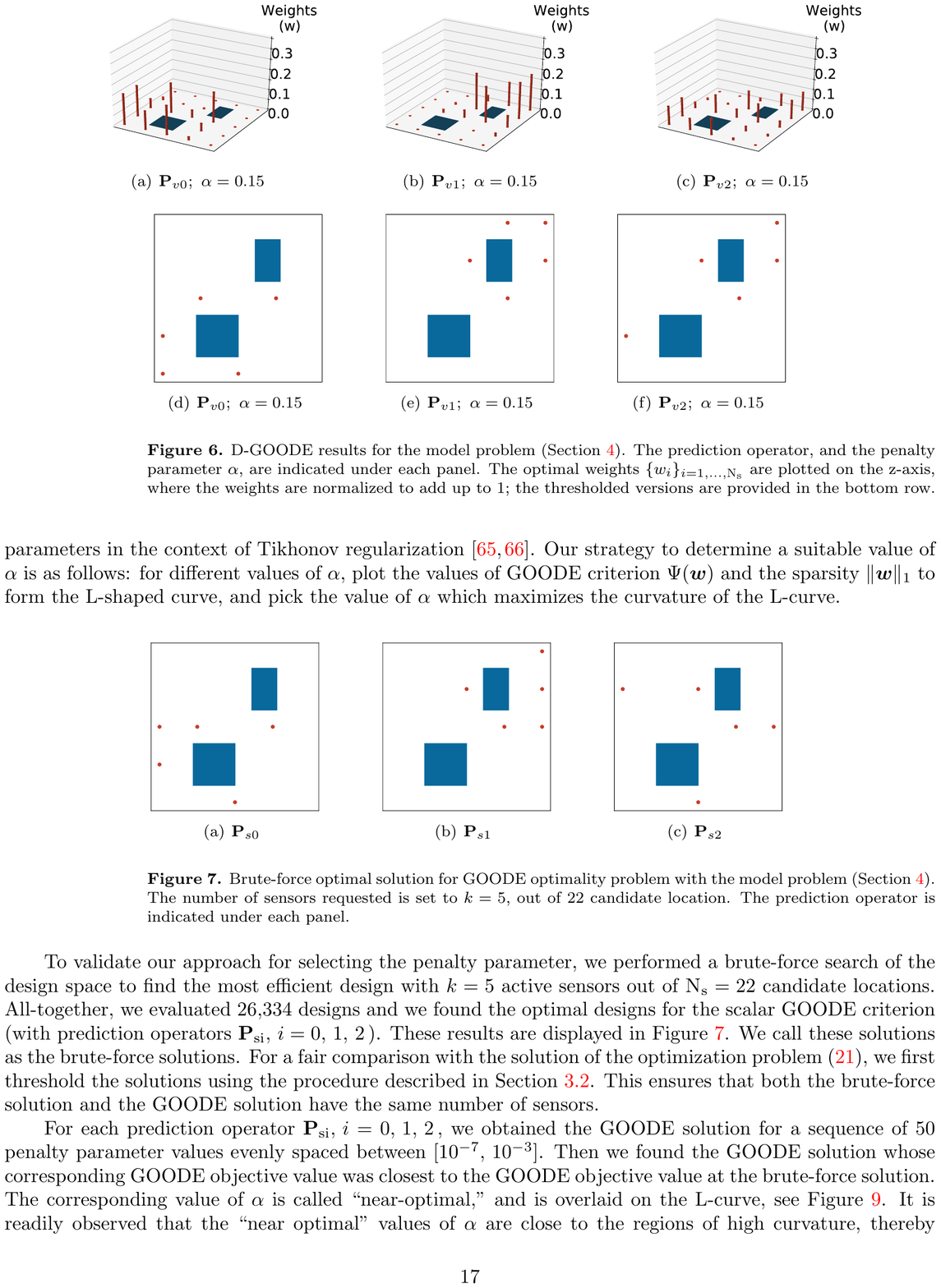}
	\caption{Brute-force optimal solution for GOODE optimality problem with the model problem (Section~\ref{sec:Experiment_Setup}).
		The number of sensors requested is set to $k=5$, out of $22$ candidate location.
		The prediction operator is indicated under each panel.
		}
	\label{fig:brute_force_scalar_A}
\end{figure}

To validate our approach for selecting the penalty parameter, we
performed a brute-force search of the design space to find the most efficient
design with $k=5$ active sensors out of $\Nsens=22$ candidate locations.
All-together, we evaluated $26{,}334$ designs and we found the optimal designs
for the scalar GOODE criterion (with prediction operators $\Predmat_{\rm si},\,
i = 0,\,1,\,2\,$). These results are displayed in
Figure~\ref{fig:brute_force_scalar_A}. We call these solutions as the
brute-force solutions. For a fair comparison with the solution of the
optimization problem~\eqref{eqn:optim_optimization_vector}, we first threshold
the solutions using the procedure described in Section~\ref{sec:sensors}. This
ensures that both the brute-force solution and the GOODE solution have the same
number of sensors.

\begin{figure}[!ht]
	\centering
    \includegraphics[width=0.62\linewidth]{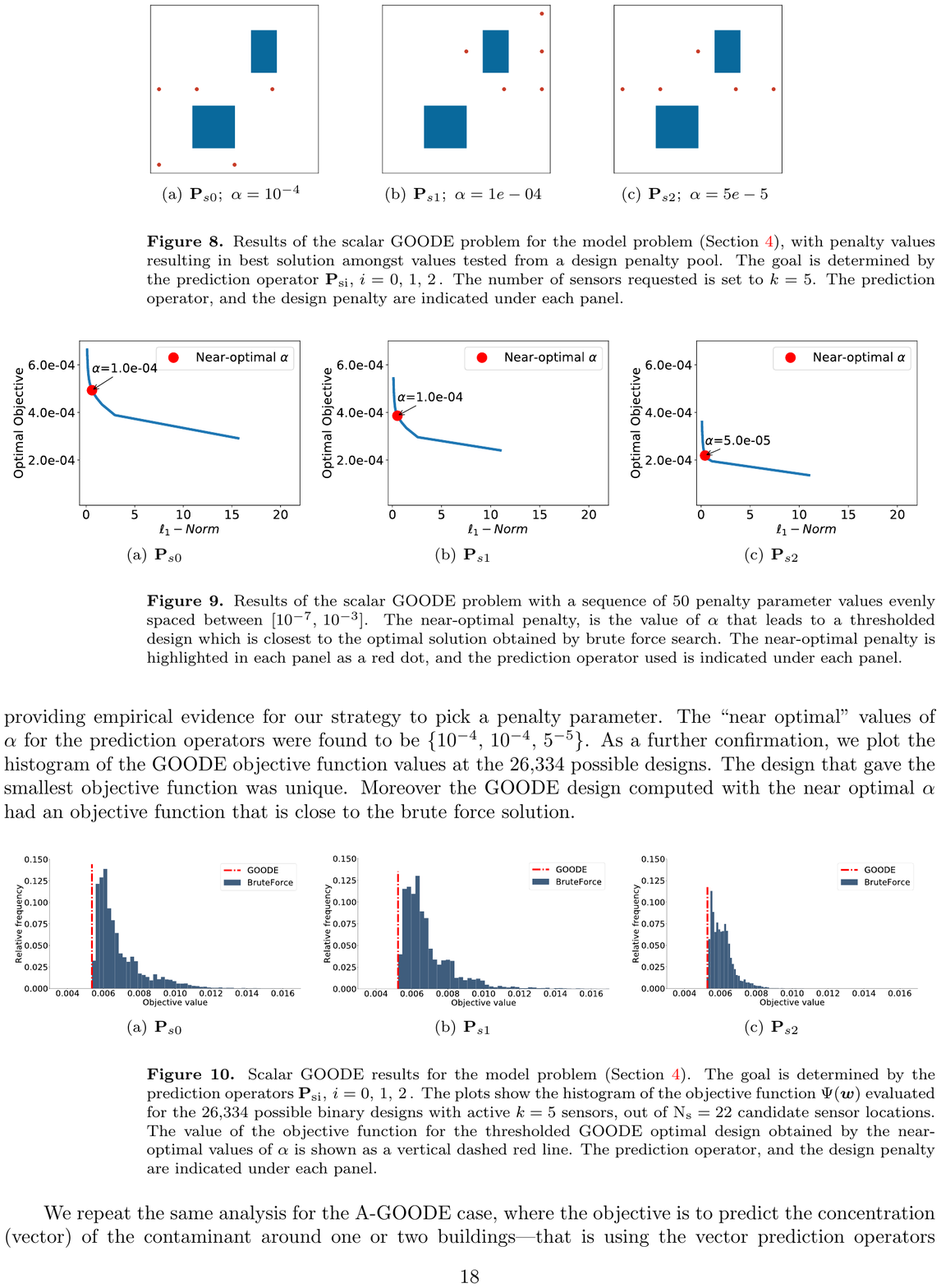}
\caption{Results of the scalar GOODE problem for the model problem (Section~\ref{sec:Experiment_Setup}), with penalty values resulting in best solution amongst values tested
from a design penalty pool. The goal is determined by the prediction operator $\Predmat_{\rm si},\, i = 0,\,1,\,2\,$.   The number
of sensors requested is set to $k=5$.  The prediction operator, and the
design penalty are indicated under each panel.  }
\label{fig:best_solutions_scalar_A}
\end{figure}

	\begin{figure}[!ht]
		\centering
    \includegraphics[width=0.90\linewidth]{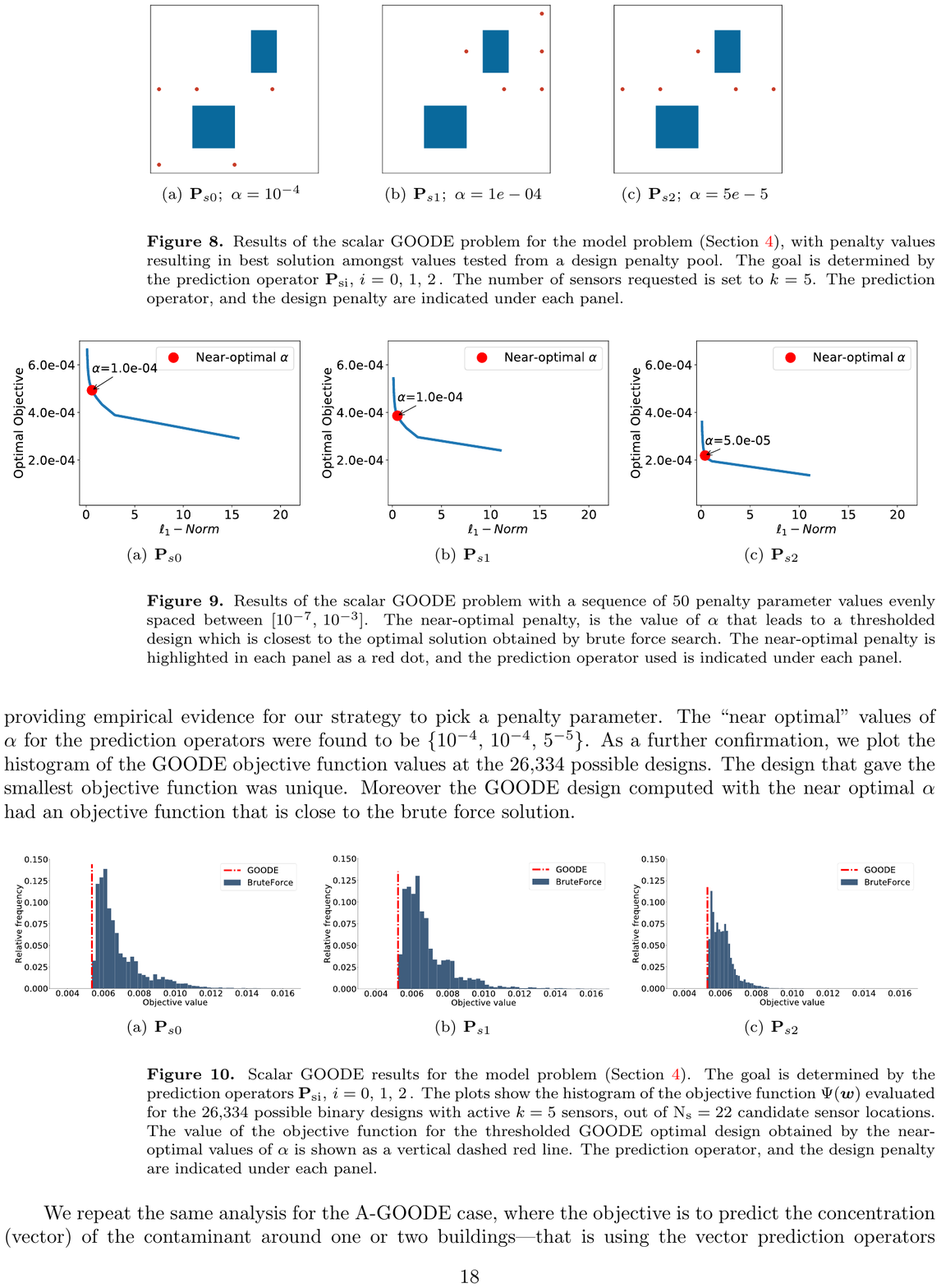}
		\caption{Results of the scalar GOODE problem with a sequence of $50$ penalty parameter values evenly spaced between $[10^{-7},\,  10^{-3}]$.
		The near-optimal penalty, is the value of $\alpha$ that leads to a thresholded design which is closest to the optimal solution obtained by brute force search.
		The near-optimal penalty is highlighted in each panel as a red dot, and the prediction operator used is indicated under each panel.
		}
		\label{fig:scalar_GOOED_regularization}
	\end{figure}

For each prediction operator $\Predmat_{\rm si},\, i = 0,\,1,\,2\,$, we
obtained the GOODE solution for a sequence of $50$ penalty parameter values
evenly spaced between $[ 10^{-7},\,  10^{-3}]$. 
Then we found the GOODE solution whose corresponding GOODE objective value 
was closest to the GOODE objective value at the brute-force solution.
%
%
The corresponding value of
$\alpha$ is called ``near-optimal,'' and is overlaid on the L-curve, see
Figure~\ref{fig:scalar_GOOED_regularization}. It is readily observed that the
``near optimal'' values of $\alpha$ are close to the regions of high curvature,
thereby providing empirical evidence for our strategy to pick a penalty
parameter. The ``near optimal'' values of $\alpha$ for the prediction operators
were found to be $\{10^{-4},\, 10^{-4},\, 5^{-5}\}$. As a further confirmation,
we plot the histogram of the GOODE objective function values at 
the $26{,}334$ possible designs. The design that gave the smallest objective
function was unique. Moreover the GOODE design computed with the near optimal
$\alpha$ had an objective function that is close to the brute force solution.

\begin{figure}[!ht]
	\centering
    \includegraphics[width=0.90\linewidth]{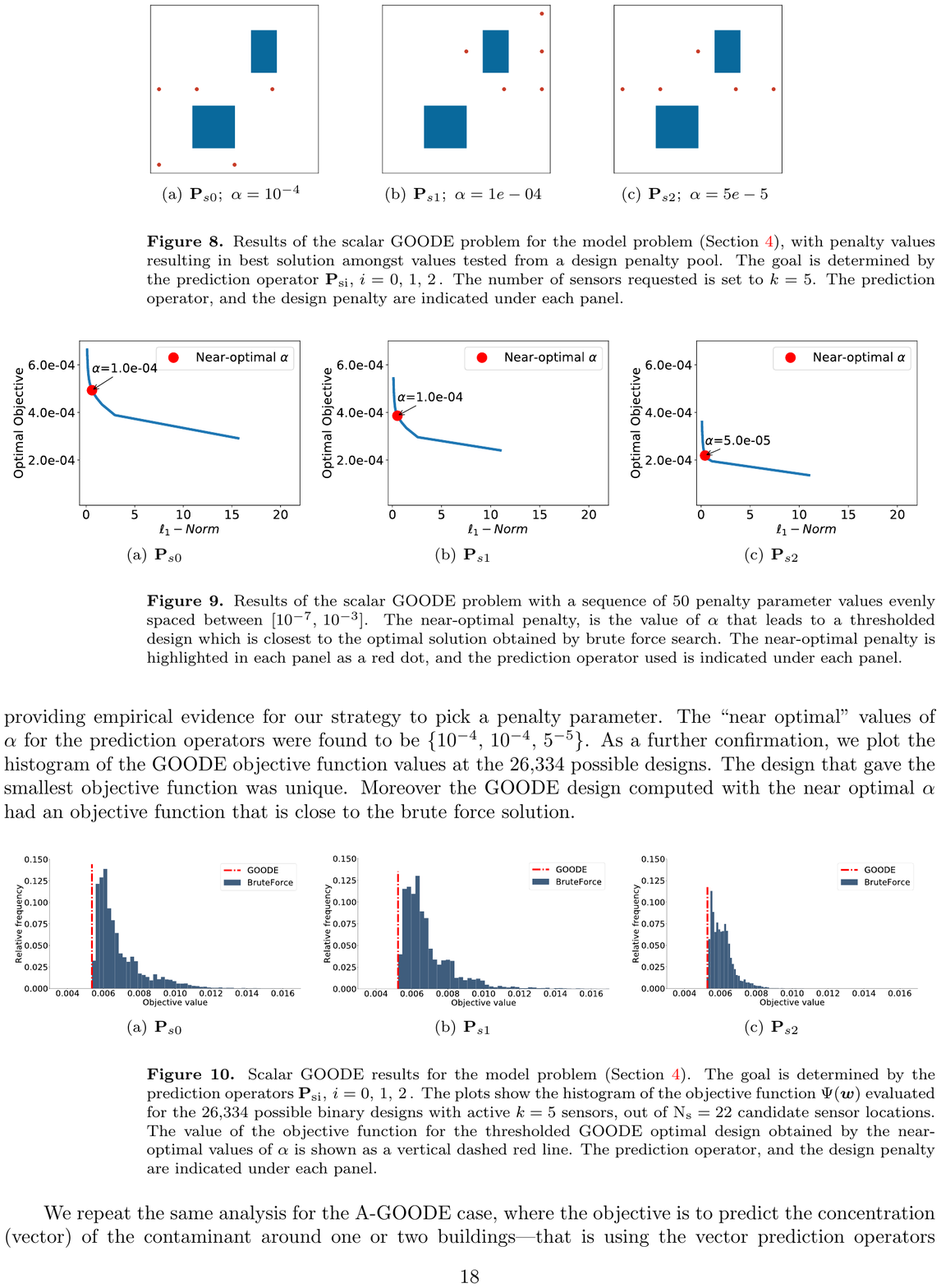}
	\caption{Scalar GOODE results for the model problem (Section~\ref{sec:Experiment_Setup}).
	The goal is determined by the prediction operators $\Predmat_{\rm si},\, i = 0,\,1,\,2\,$.
	The plots show the histogram of the objective function $\Psi(\design)$ evaluated for the $26{,}334$ possible binary designs with active $k=5$ sensors, out of $\Nsens=22$ candidate sensor locations.
	The value of the objective function for the thresholded GOODE optimal design obtained by the near-optimal values of $\alpha$ is shown as a vertical dashed red line.
	The prediction operator, and the design penalty are indicated under each panel.
	}
	\label{fig:scalar_GOOED_objective_freq}
\end{figure}

We repeat the same analysis for the A-GOODE case, where the objective is to
predict the concentration (vector) of the contaminant around one or two
buildings---that is using the vector prediction operators $ \Predmat_{\rm
vi},\, i = 0,\,1$.  Specifically, in
Figure~12(a) and~12(b),
we plot the L-curve for $75$
values of the penalty parameter $\alpha$ in the interval $[ 10^{-7},\,  0.2]$
where the prediction operators $\Predmat_{\rm vi},\, i=0,\,1$ are used. The
conclusions are similar: the near optimal parameter $\alpha$ is close to the
elbow of the L-curve, and the histogram in Figure~\ref{fig:A_GOOED_objective_freq_2} confirms this.

	\begin{figure}
		\centering
    \includegraphics[width=0.62\linewidth]{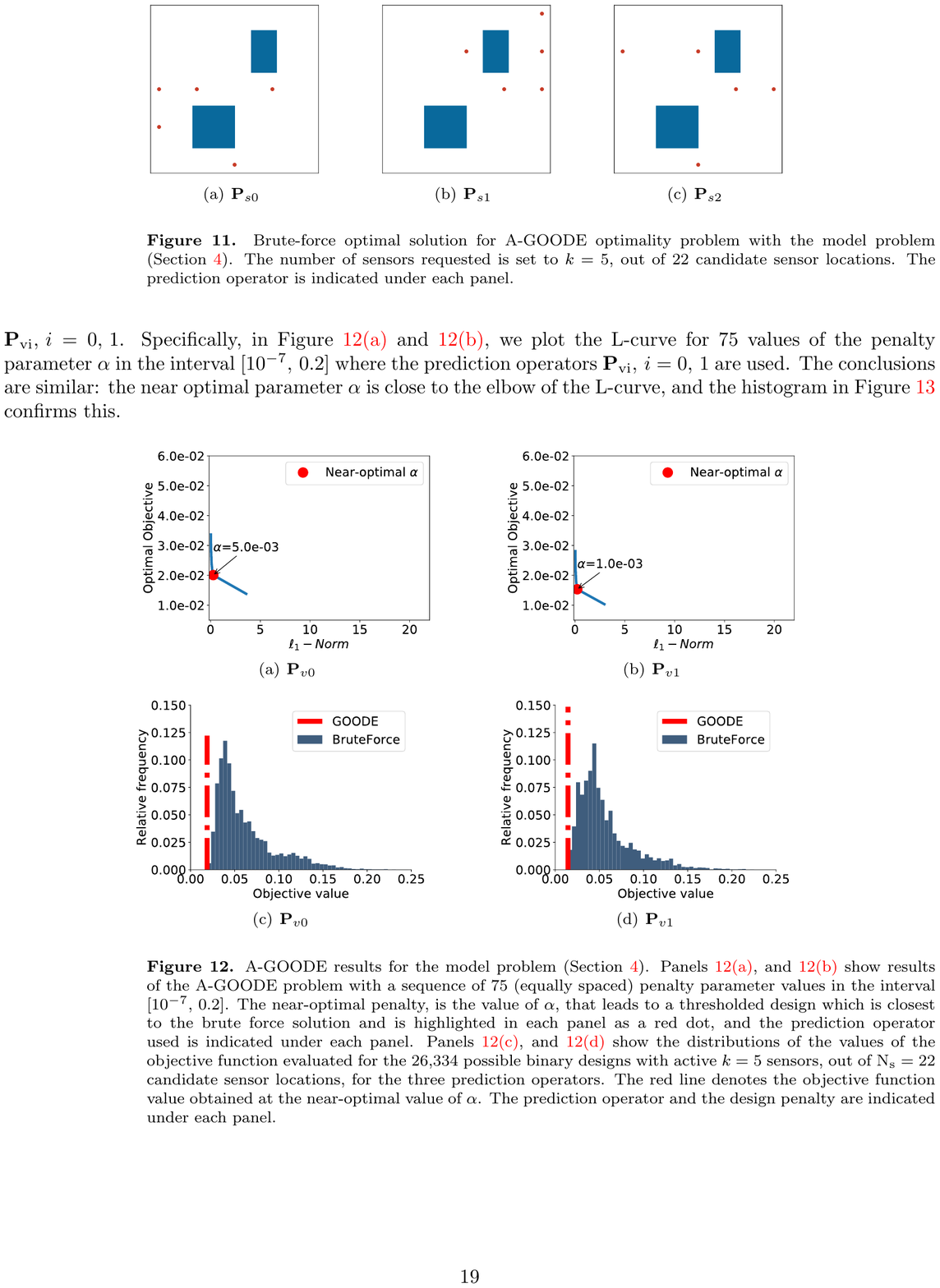}
		\caption{Brute-force optimal solution for A-GOODE optimality problem with the model problem (Section~\ref{sec:Experiment_Setup}).  
		The number of sensors requested is set to $k=5$, out of $22$ candidate sensor locations.
		The prediction operator is indicated under each panel.
		}
		\label{fig:brute_force_vector_A}
	\end{figure}
	%
	%


	%
	%
	\begin{figure}[!ht]
		\centering
    \includegraphics[width=0.64\linewidth]{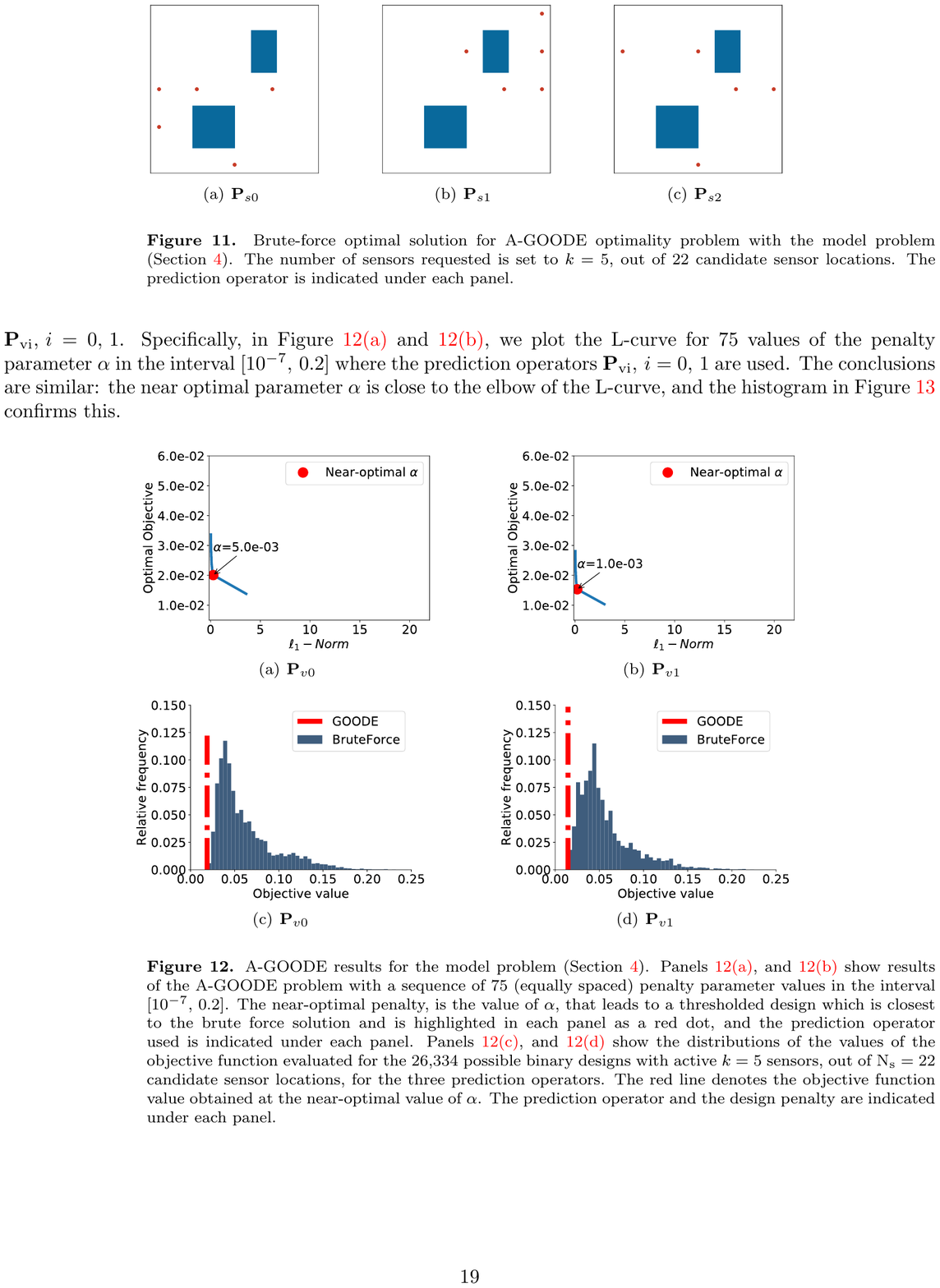}
		\caption{A-GOODE results for the model problem
(Section~\ref{sec:Experiment_Setup}).
    Panels~12(a), and~12(b)
show results of the A-GOODE
problem with a sequence of $75$ (equally spaced) penalty parameter values in
the interval $[10^{-7},\,  0.2]$.  The near-optimal penalty, is the value of
$\alpha$, that leads to a thresholded design which is closest to the brute
force solution and is highlighted in each panel as a red dot, and the
prediction operator used is indicated under each panel.
Panels~12(c), and~12(d)
show the distributions of the values of the objective function evaluated for
the $26{,}334$ possible binary designs with active $k=5$ sensors, out of
$\Nsens=22$ candidate sensor locations, for the three prediction operators. The red
line denotes the objective function value obtained at the near-optimal value of
$\alpha$.  The prediction operator and the design penalty are indicated under
each panel.  }
		\label{fig:vector_A_GOOED_regularization}
	\end{figure}
	\begin{figure}[!ht]
		\centering
    \includegraphics[width=0.64\linewidth]{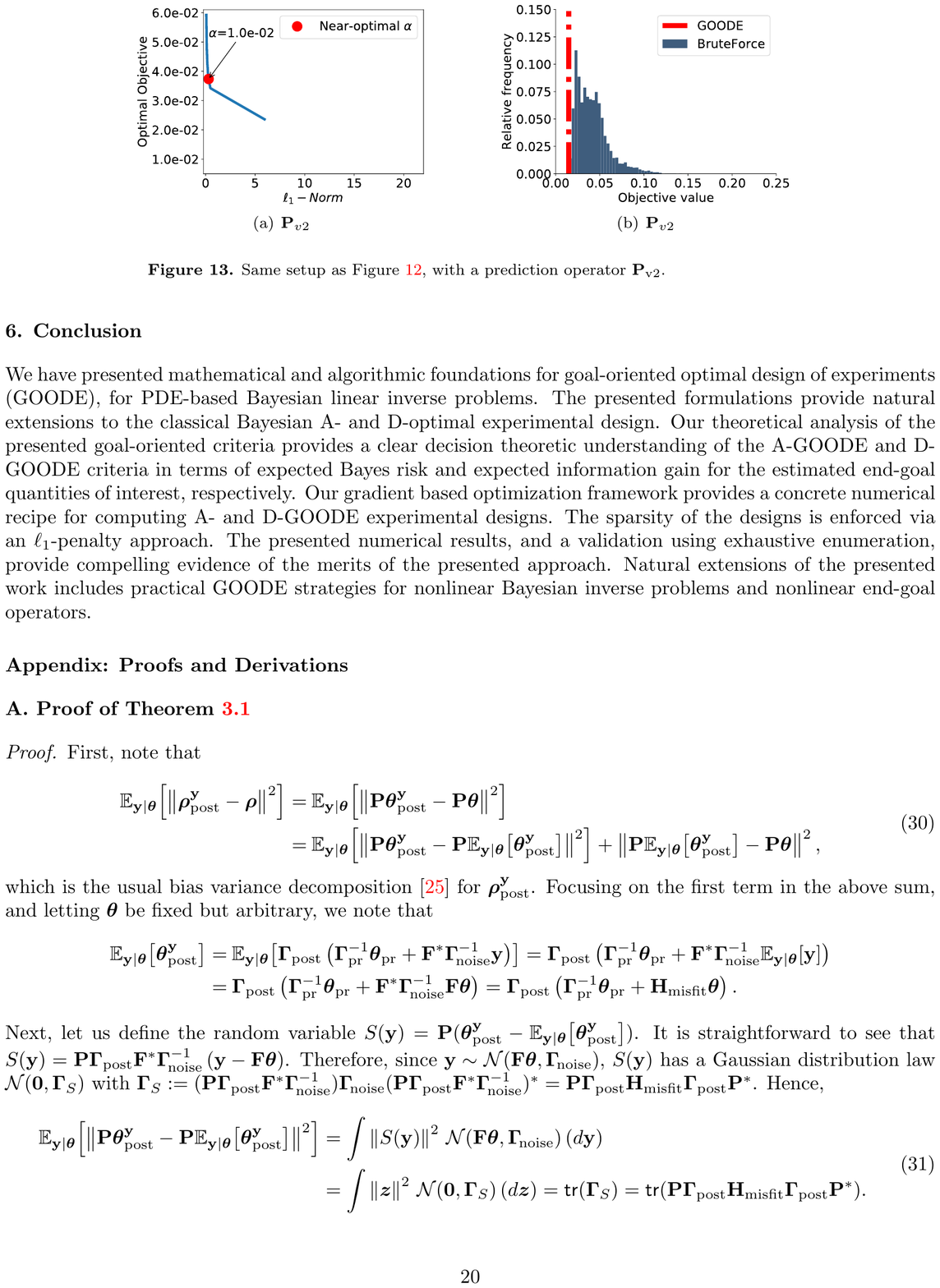}
		\caption{Same setup as Figure~\ref{fig:vector_A_GOOED_regularization}, with a prediction operator $\Predmat_{\rm v2}$. }
		\label{fig:A_GOOED_objective_freq_2}
	\end{figure}
	%
	%


\section{Conclusion}\label{sec:Conclusions}
%
We have presented mathematical and algorithmic foundations for goal-oriented
optimal design of experiments (GOODE), for PDE-based Bayesian linear inverse
problems.  The presented formulations provide natural extensions to the
classical Bayesian A- and D-optimal experimental design. Our theoretical
analysis of the presented goal-oriented criteria provides a clear decision
theoretic understanding of the A-GOODE and D-GOODE criteria in terms of
expected Bayes risk and expected information gain for the estimated 
end-goal quantities of interest, 
respectively. Our gradient based optimization framework provides a
concrete numerical recipe for computing A- and D-GOODE experimental designs.
The sparsity of the designs is enforced via an $\ell_1$-penalty approach. The
presented numerical results, and a validation using exhaustive enumeration,
provide compelling evidence of the merits of the presented approach. Natural
extensions of the presented work includes practical GOODE strategies for
nonlinear Bayesian inverse problems and nonlinear end-goal operators.

	\section*{Appendix: Proofs and Derivations}  
	\begin{appendices}
		\section{Proof of Theorem~\ref{thm:bayesrisk}}\label{apdx:proof_of_bayesrisk}

		\newcommand{\avey}[1]{\Expect{\obs |\iparam}{ {#1} }}
		\begin{proof}
			First, note that
%

			\begin{equation}\label{eqn:MSE_decomp}
				 \begin{aligned}
				{ \Expect{\obs |\iparam}{\sqnorm{\preda^\obs - \pred}} }
				= & \>\Expect{\obs |\iparam}{\sqnorm{\Predmat \iparama - \Predmat \iparam}}  \\
				= & \>  \Expect{\obs |\iparam}{\sqnorm{\Predmat \iparama -  \Predmat \Expect{\obs |\iparam}{ \iparama} } }
				+  \sqnorm{ \Predmat \Expect{\obs |\iparam}{ \iparama} - \Predmat \iparam }, 
				  \end{aligned}
			\end{equation}
which is the usual bias variance decomposition~\cite{HaberHoreshTenorio08} for
$\preda^\obs$.  
Focusing on the first term in the above sum, and letting 
$\iparam$ be fixed but arbitrary, we note that
			\begin{equation*}
				\begin{aligned}
					\Expect{\obs |\iparam}{ \iparama}
					= & \>  \Expect{\obs |\iparam}{  \Cparampostmat
					\left( \Cparampriormat^{-1} \iparb + \F^* \Cobsnoise^{-1} \obs  \right)  }
					=  \Cparampostmat \left( \Cparampriormat^{-1} \iparb + \F^* \Cobsnoise^{-1}
					\Expect{\obs |\iparam}{\obs}  \right)    \\
					= & \> \Cparampostmat \left( \Cparampriormat^{-1} \iparb + \F^* \Cobsnoise^{-1} \F\iparam  \right)    
					=   \Cparampostmat \left( \Cparampriormat^{-1} \iparb + \HMmat \iparam \right).
				\end{aligned}
			\end{equation*}
Next, let us define the random variable ${S}(\obs) =
\Predmat(\iparama -  \Expect{\obs |\iparam}{ \iparama})$. 
It is straightforward to see that
${S}(\obs) = \Predmat\Cparampostmat \F^* \Cobsnoise^{-1} \left( \obs - \F \iparam \right)$.
Therefore, since $\obs \sim \GM{\F \iparam}{\Cobsnoise}$, 
${S}(\obs)$ has a Gaussian distribution law $\GM{\vec{0}}{\mat{\Gamma}_S}$
with 
$\mat{\Gamma}_S := 
(\Predmat\Cparampostmat \F^* \Cobsnoise^{-1})\Cobsnoise 
(\Predmat\Cparampostmat \F^* \Cobsnoise^{-1})^*
= \Predmat\Cparampostmat \HMmat \Cparampostmat \Predmat^*$.
Hence,
\begin{equation}\label{eqn:expected_post_pred_dev}
				\begin{aligned}
					\Expect{\obs |\iparam}{\sqnorm{\Predmat \iparama -  \Predmat \Expect{\obs |\iparam}{ \iparama} } }
					&= \int \sqnorm{ {S}(\obs)} \, \GM{\F \iparam}{\Cobsnoise} (d\obs)\\
					&= \int \sqnorm{ \vec{z}} \, \GM{\vec{0}}{\mat{\Gamma}_S} (d\vec{z}) = \trace(\mat{\Gamma}_S) 
= \trace(\Predmat\Cparampostmat \HMmat \Cparampostmat \Predmat^*).
				\end{aligned}
			\end{equation}
Next, we consider the second term in~\eqref{eqn:MSE_decomp}.  
A similar calculation shows,
\begin{equation}\label{eqn:second_term}
\Expect{\priorm}{ \sqnorm{ \Predmat \Expect{\obs |\iparam}{ \iparama} - \Predmat \iparam } }
= \trace(\Predmat \Cparampostmat \Cparampriormat^{-1} \Cparampostmat \Predmat^*).
\end{equation}
Thus, combining~\eqref{eqn:expected_post_pred_dev},~\eqref{eqn:second_term},
along with~\eqref{eqn:MSE_decomp}, we get
\begin{equation*}
\Expect{\priorm}{\Expect{\obs |\iparam}{\sqnorm{\preda^\obs - \pred}}}
= \trace\big(\Predmat\Cparampostmat(\HMmat +  \Cparampriormat^{-1}) \Cparampostmat \Predmat^*\big)
= \trace(\Predmat\Cparampostmat\Predmat^*).~\qedhere
\end{equation*}
\end{proof}

\section{Gradient Derivation of Goal-Oriented A-Optimality Objective}\label{appen:A_optim_discrete_vector_gradient}
In this section we derive the derivative of the goal-oriented A-optimality objective $\Psi^\GA$.
We take the derivative of trace of the end-goal posterior covariance matrix $\wCpredpostmat$ with respect to the design weights $w_i$, i.e.
		\begin{equation}
			\frac{\partial}{\partial\, w_i}  \Psi^\GA
			= \frac{\partial}{\partial\, w_i}  \Trace{ \wCpredpostmat }
			= \Trace{  \frac{\partial}{\partial\, w_i}   \wCpredpostmat   };\quad i=1,2,\ldots, \Nsens \,.
		\end{equation}
		Using $\Hessmat^{-1}(\design)  = \Cparampriormat^{-1} + \wHMmat$
		\begin{equation}\label{eqn:initial_trace_derivative}
			\, \Trace{\frac{\partial}{\partial\, w_i}  \wCpredpostmat }
			= \Trace{  \Predmat \,\frac{\partial}{\partial\, w_i}   \Hessmat^{-1}(\design) \, \Predmat^*  }
			= -\Trace{  \Predmat\Hessmat^{-1}(\design) \,\frac{\partial \wHMmat}{\partial\, w_i} \,\Hessmat^{-1}(\design) \Predmat^* }.
		\end{equation}
The derivative of the weighted Hessian misfit $\wHessmat$ with respect to the design $\design$, is obtained as follows:
		\begin{equation} \label{eqn:Hessian_misfit_derivative}
			\begin{aligned}
				\frac{\partial\, \wHMmat}{\partial\, w_i}  &= \frac{\partial\, }{\partial\, w_i} \Bigl( \F^* \wdesignmat \F \Bigr) 
				= 	\F^*\, \Cobsnoise^{-1/2} \left( \frac{\partial\, \designmat}{\partial \, w_i}\right) \Cobsnoise^{-1/2} \,\F  \\
				&= \F^*\, \Cobsnoise^{-1/2} \left( \one \otimes \vec{e}_i \right) \left( \one \otimes \vec{e}_i \right)\tran \Cobsnoise^{-1/2} \,\F,  \\
			\end{aligned}
		\end{equation}
where $\one \in \mathbb{R}^{\Nobs}$ is a vector of ones, and $\vec{e}_i$ is the $i^{th}$ coordinate vector in $\Rnum^{\Nobs}$. Here $\otimes$ refers to the Kronecker product.

Given a set of temporally uncorrelated observations $ \{
\yt{k}=\y[\tind{k}] \}_{k=1,2,\ldots,\nobs}$, available at the discrete time
instances $\{ \tind{k} \}_{k=1,2,\ldots,\nobs} \subset [\tind{0},T_{\rm F}]$,
the derivative~\eqref{eqn:Hessian_misfit_derivative} expands to:
		\begin{equation}
			\frac{\partial\, \wHMmat}{\partial\, w_i}  = \sum_{k=1}^{\nobs}{  \F_{0,k}^*\, \Cobsnoisemat_k^{-\frac{1}{2}} \vec{e}_i \vec{e}_i\tran \Cobsnoisemat_k^{-\frac{1}{2}} \,\F_{0,k}  } \,,
		\end{equation}
where $\F_{0,k}$ is the forward model that maps the parameter
to the equivalent sensor measurements at time instance $\tind{k}$, and
$\Cobsnoisemat_k$ is the covariance of the measurement noise at time instance
$\tind{k}$.

From~\eqref{eqn:initial_trace_derivative} and
\eqref{eqn:Hessian_misfit_derivative}), it follows that:
		\begin{equation} \label{eqn:posterior_cov_trace_derivative}
			\begin{aligned}
				\frac{\partial\, }{\partial\, \design_i} \Trace{ \wCpredpostmat} 	&= - \trace{ \left( \Predmat \, \left[ \wHessmat \right]^{-1} \sum_{k=1}^{\nobs}{  \F_{0,k}^*\, \Cobsnoisemat_k^{-\frac{1}{2}} \vec{e}_i \vec{e}_i\tran \Cobsnoisemat_k^{-\frac{1}{2}} \,\F_{0,k}  } \left[\wHessmat\right]^{-1} \Predmat^*  \right)
				}  \\
				&= -\sum_{k=1}^{\nobs}{ \sum_{j=1}^{\Npred} \left( \Cobsnoisemat_k^{-\frac{1}{2}} \F_{0,k}  \, \left[\wHessmat\right]^{-1} \Predmat^* \right)^2 _{i,j} }\qquad i=1,2,\ldots, \Nsens \,,
			\end{aligned}
		\end{equation}
where the subindex $i,j$ indicates the $i^{th}$ row, and $j$ is
the $j^{th}$ column of a matrix.
Some elementary algebra shows  
\begin{equation} \label{eqn:A_optim_discrete_vector_gradient_derived}
			\begin{aligned}
				\nabla_\design \Trace{\wCpredpostmat} 		& = - \sum_{k=1}^{\nobs}{
				\sum_{j=1}^{\Npred}{
					\boldsymbol{\zeta}_{k,j} \odot \boldsymbol{\zeta}_{k,j}.
				}
				},
			\end{aligned}
		\end{equation}
where $\boldsymbol{\zeta}_{k,j}$ was defined in~\eqref{eqn:zetakj}.

		\section{Gradient derivation of goal-oriented D-Optimality Objective} \label{appen:D_optim_discrete_vector_gradient}
		In this section we derive the gradient of the goal-oriented D-optimality objective $\Psi^\GD$. The important observation is that
		\begin{equation}
			\frac{\partial\, }{\partial\, w_i} \Psi^\GD = 	\frac{\partial\, }{\partial\, w_i} \left( \log{ \det {\left(  \wCpredpostmat  \right)}} \right)
				= \trace{\left(  \Cpredpostmat^{-1}(\design)\, \frac{\partial\, \wCpredpostmat}{\partial\, w_i}   \right)}.
		\end{equation}

\subsection{First form}\label{appen:D_optim_discrete_vector_gradient_form1}

A similar argument as in the previous subsection gives for $i=1,\dots,\Nsens$ 
\begin{equation}\label{eqn:inter} 
	\frac{\partial\, \Psi^\GD}{\partial\, w_i}  = - \sum_{k=1}^{\nobs}\trace{ \left(\Cpredpostmat^{-1}(\design) \Predmat \, \left[ \wHessmat \right]^{-1}{  \F_{0,k}^*\, \Cobsnoisemat_k^{-\frac{1}{2}} \vec{e}_i \vec{e}_i\tran \Cobsnoisemat_k^{-\frac{1}{2}} \,\F_{0,k}  } \left[\wHessmat\right]^{-1} \Predmat^*  \right)}.
\end{equation}
Factor $\Cpredpostmat^{-1}(\design) = \Cpredpostmat^{-1/2}(\design) \Cpredpostmat^{-1/2}(\design)$, and use the cyclic property of trace operator to write 
\[\frac{\partial\, \Psi^\GD}{\partial\, w_i}  = - \sum_{k=1}^{\nobs}{ \sum_{j=1}^{\Npred} \left( \Cobsnoisemat_k^{-\frac{1}{2}} \F_{0,k}  \, \left[\wHessmat\right]^{-1} \Predmat^* \Cpredpostmat^{-1/2}(\design) \right)^2 _{i,j} }. \] 
With $\boldsymbol\xi_{k,j}$ defined in~\eqref{eqn:xikj}, we have the 
desired form~\eqref{eqn:D_optim_discrete_vector_gradient_form_1}.

\subsection{Second form}\label{appen:D_optim_discrete_vector_gradient_form2}

The second derivation takes a different turn at~\eqref{eqn:inter}. Use the cyclic property of the trace operator to rewrite as 
\begin{equation} 
	\frac{\partial\, \Psi^\GD}{\partial\, w_i}  = - \sum_{k=1}^{\nobs}{\left( \vec{e}_i\tran \Cobsnoisemat_k^{-\frac{1}{2}} \,\F_{0,k}   \left[\wHessmat\right]^{-1} \Predmat^*\Cpredpostmat^{-1}(\design) \Predmat \, \left[ \wHessmat \right]^{-1}  \F_{0,k}^*\, \Cobsnoisemat_k^{-\frac{1}{2}} \vec{e}_i   \right)}.
\end{equation}
The trace operator drops out since the summands are scalars. With
$\vec{\eta}_{k,i}$ as defined in~\eqref{eqn:etaki}, the derivative takes the
form of~\eqref{eqn:D_optim_discrete_vector_gradient_form_2}.

%
%
\section{On D-GOODE criterion}\label{appen:On_D_Optimality}
%
Recall that the goal operator $\Predmat$ maps elements of the discretized
parameter space, $\Rnum^{\Nparam}$, to elements of the end-goal space
$\Rnum^{\Npred}$.  The discretized parameter space is endowed with the inner
product $\ipg{\cdot}{\cdot}$ which, in general, is a discretization of the $L^2$
inner product. For instance, if finite-element discretization scheme is used,
this inner product will be the Euclidean inner product weighted with the finite
element mass matrix. The goal space $\Rnum^{\Npred}$ is endowed with the standard
Euclidean inner product.

The following lemma provides a key relation needed in proof
of the main result in this section.
\begin{lemma}\label{l_inter2}
With the definition  $\mat{G} = \Predmat^* { \left( \Predmat \Cparampriormat \Predmat^* \right) }^{-1}\Predmat$,
\begin{equation}\label{equ:Gmat}
\mathbb{E}_{\iparam}\, \mathbb{E}_{\obs|\iparam} \left[ \sqwnorm{ \preda^\obs - \predb  } {\Cpredpriormat^{-1} } \right] =  \trace(\Cparamprior\HMmat\Cparampost\mat{G}).
\end{equation}
\end{lemma}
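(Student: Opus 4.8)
The plan is to reduce the weighted norm on the left to an $\ipg{\cdot}{\cdot}$-inner product against $\mat{G}$, then to split $\iparama-\iparamb$ into a conditional-mean part and a zero-mean fluctuation exactly as in the proof of Theorem~\ref{thm:bayesrisk}, take expectations term by term, and finish with the algebraic identity $\HMmat\Cparampriormat+\mat{I}=\Cparampostmat^{-1}\Cparampriormat$.

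First I would observe that since $\Predmat$ has full row rank and $\Cparampriormat\succ0$, the operator $\Cpredpriormat=\Predmat\Cparampriormat\Predmat^*$ is invertible, $\mat{G}=\Predmat^*\Cpredpriormat^{-1}\Predmat$ is self-adjoint, and $\preda^\obs-\predb=\Predmat(\iparama-\iparamb)$, so that
\[
\sqwnorm{\preda^\obs-\predb}{\Cpredpriormat^{-1}}
=\ip{\Predmat(\iparama-\iparamb)}{\Cpredpriormat^{-1}\Predmat(\iparama-\iparamb)}
=\ipg{\iparama-\iparamb}{\mat{G}(\iparama-\iparamb)}.
\]
Next, reusing the computation in Appendix~\ref{apdx:proof_of_bayesrisk}: for fixed $\iparam$ one has $\Expect{\obs|\iparam}{\iparama}=\Cparampostmat(\Cparampriormat^{-1}\iparamb+\HMmat\iparam)$, and since $\Cparampostmat\Cparampriormat^{-1}=\mat{I}-\Cparampostmat\HMmat$ this gives $\Expect{\obs|\iparam}{\iparama}-\iparamb=\Cparampostmat\HMmat(\iparam-\iparamb)$, while $\iparama-\Expect{\obs|\iparam}{\iparama}=\Cparampostmat\F^*\Cobsnoise^{-1}(\obs-\F\iparam)$. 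Thus $\iparama-\iparamb$ is the sum of a term that is deterministic given $\iparam$ and a term with zero mean over $\obs\mid\iparam$; since $\mat{G}$ is self-adjoint, the cross term vanishes upon taking $\Expect{\obs|\iparam}{\cdot}$.

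Then I would take $\Expect{\iparam}{\cdot}$ using $\iparam-\iparamb\sim\GM{\vec{0}}{\Cparampriormat}$ and $\obs-\F\iparam\sim\GM{\vec{0}}{\Cobsnoise}$ together with the Gaussian trace identity $\mathbb{E}\,\ipg{\vec{x}}{\mat{B}\vec{x}}=\trace(\mat{B}\mat{C})$ for $\vec{x}\sim\GM{\vec{0}}{\mat{C}}$. The deterministic part contributes $\trace(\HMmat\Cparampostmat\mat{G}\Cparampostmat\HMmat\Cparampriormat)$ (using $(\Cparampostmat\HMmat)^*=\HMmat\Cparampostmat$), and the fluctuation part, after moving $\Cparampostmat\F^*\Cobsnoise^{-1}$ across and using $\HMmat=\F^*\Cobsnoise^{-1}\F$ with cyclicity of the trace, contributes $\trace(\HMmat\Cparampostmat\mat{G}\Cparampostmat)$. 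Adding, factoring $\HMmat\Cparampostmat\mat{G}\Cparampostmat$ on the left, and recognizing $\HMmat\Cparampriormat+\mat{I}=(\HMmat+\Cparampriormat^{-1})\Cparampriormat=\Cparampostmat^{-1}\Cparampriormat$, the sum collapses to $\trace(\HMmat\Cparampostmat\mat{G}\Cparamprior)=\trace(\Cparamprior\HMmat\Cparampost\mat{G})$, which is~\eqref{equ:Gmat}.

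The step I expect to require the most care is not any single algebraic move but bookkeeping the adjoints with respect to the non-Euclidean inner product $\ipg{\cdot}{\cdot}$ on $\Rnum^{\Nparam}$ — so that $\Predmat^*\Cpredpriormat^{-1}\Predmat=\mat{G}$, $(\Cparampostmat\HMmat)^*=\HMmat\Cparampostmat$, and $(\Cparampostmat\F^*\Cobsnoise^{-1})^*=\Cobsnoise^{-1}\F\Cparampostmat$ — and making sure the Gaussian second-moment identity and the cyclicity of the (Hilbert-space) trace are applied consistently in this weighted setting; once that is in place, the final simplification via $\HMmat\Cparampriormat+\mat{I}=\Cparampostmat^{-1}\Cparampriormat$ is immediate.
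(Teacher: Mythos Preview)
Your proposal is correct and follows essentially the same route as the paper: both reduce the weighted norm to $\ipg{\iparama-\iparamb}{\mat{G}(\iparama-\iparamb)}$, decompose $\iparama-\iparamb$ into the conditional-mean piece $\Cparampostmat\HMmat(\iparam-\iparamb)$ and the zero-mean fluctuation $\Cparampostmat\F^*\Cobsnoise^{-1}(\obs-\F\iparam)$, obtain the two trace terms $\trace(\HMmat\Cparampostmat\mat{G}\Cparampostmat)$ and $\trace(\HMmat\Cparampostmat\mat{G}\Cparampostmat\HMmat\Cparampriormat)$, and collapse their sum via $\Cparampostmat(\mat{I}+\HMmat\Cparampriormat)=\Cparampriormat$. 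The only difference is cosmetic: the paper packages the Gaussian second-moment computation as $\vec{u}+\mat{B}\obs$ with $\vec{u}=-\Cparampostmat\HMmat\iparb$ rather than phrasing it as a bias--variance split, but the algebra is identical.
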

\begin{proof}
With $\preda^\obs = \Predmat\iparpost$ and $\predb = \Predmat\iparprior$, we note
\[
   \sqwnorm{ \preda^\obs - \predb}{\Cpredpriormat^{-1}}
   = (\preda^\obs - \predb)^\top\Cpredpriormat^{-1}(\preda^\obs - \predb)
   = \ipg{\iparpost - \iparprior}{\mat{G}(\iparpost - \iparprior)}.
\]
Furthermore,
\begin{equation*}
\begin{aligned}
	\iparama - \iparamb
	= & \>\Cparampostmat \left( \Cparampriormat^{-1} \iparb + \F^* \Cobsnoise^{-1}\obs \right) - \iparamb\\
	= & \> \Cparampostmat \left( \Cparampriormat^{-1} -\Cparampostmat^{-1}\right) \iparb
           + \Cparampostmat \F^* \Cobsnoise^{-1} \obs
	= - \Cparampostmat \HMmat\iparb + \Cparampostmat \F^* \Cobsnoise^{-1}\obs.
\end{aligned}
\end{equation*}
We tackle the inner expectation in~\eqref{equ:Gmat} first.
Define $\vec{u} =  -\Cparampostmat\HMmat\iparb$ and $\mat{B}
= \Cparampostmat \F^* \Cobsnoise^{-1}$.
Using the fact $\obs | \iparam \sim
\GM{ \F \iparam}{\Cobsnoise}$, and the developments above,
\begin{equation}\label{equ:inner_expectation}
\begin{aligned}
\mathbb{E}_{\obs|\iparam} \left[ \sqwnorm{\preda^\obs - \predb}{\Cpredpriormat^{-1}} \right]
&=
\int \ipg{ \vec{u} + \mat{B}\obs }{\mat{G} (\vec{u} + \mat{B}\obs} \, \GM{ \F \iparam}{\Cobsnoise}(d\obs)\\
&=
\int \ipg{ \vec{v} }{\mat{G}\vec{v}} \, \GM{ \vec{u} + \mat{B}\F \iparam }{\mat{B}\Cobsnoise\mat{B}^*}(d\vec{v})\\
&=
\trace(\mat{B}\Cobsnoise\mat{B}^*\mat{G}) +
   \ipg{ \vec{z}}{\mat{G}\vec{z} } 
= \trace(\HMmat \Cparampostmat \mat{G}\Cparampostmat ) +   \ipg{\vec{z}}{\mat{G}\vec{z}},
\end{aligned}
\end{equation}
where, for short, we write $\vec{z} = \vec{u} + \mat{B}\F \iparam = \Cparampostmat \HMmat (\iparam-\iparb)$.

Next, we take the outer expectation over the prior distribution
$\mathcal{N}(\iparb,\Cparampriormat)$. Since the trace term is constant, 
we only need to consider the second term. A calculation 
similar to the one leading to~\eqref{equ:inner_expectation} shows
\begin{equation*}
\int
\ipg{\vec{z}}{\mat{G}\vec{z}}\,
\GM{\iparb}{\Cparampriormat}(d\iparam)
%
=\trace(\HMmat\Cparampostmat \mat{G} \Cparampostmat\HMmat \Cparampriormat).
\end{equation*}
%
%
%
%
Using this along with~\eqref{equ:inner_expectation}, 
we have
\begin{align*}
\mathbb{E}_{\iparam}\,
\mathbb{E}_{\obs|\iparam} \left[ \sqwnorm{\preda^\obs - \predb}{\Cpredpriormat^{-1}} \right]
&=
\trace(\HMmat\Cparampostmat \mat{G} \Cparampostmat) +
\trace(\HMmat\Cparampostmat \mat{G} \Cparampostmat\HMmat \Cparampriormat) \\
&= \trace(\HMmat\Cparampostmat \mat{G}\Cparampriormat),
\end{align*}
where the final equality follows from the identity $ \Cparampostmat(\mat{I} + \HMmat \Cparampriormat)=\Cparampriormat$.
\end{proof}


\begin{proof}[Proof of Theorem~\ref{thm:expinfogain}]
To show the equivalence, we compute the expected information gain. Since both
$\Pa(\pred|\obs)$ and $\Pb(\pred)$ are Gaussian,  Kullback--Leibler
divergence between these two distributions has an explicit expression given by
\begin{equation}\label{e_kld}
   \DKL{\Pa(\pred|\obs)}{\Pb(\pred)}
   = \frac12 \left[ \trace{ \left( \Cpredpriormat^{-1} \Cpredpostmat \right) }
      + \sqwnorm{ \preda - \predb  } {\Cpredpriormat^{-1} }    - {\Npred}
       - \log\det\Cpredprior^{-1}\Cpredpost \right].
\end{equation}
Using the cyclic property of the trace,
\begin{equation}\label{eqn:IG_term1}
\begin{aligned}
   \trace{ \left( \Cpredpriormat^{-1} \Cpredpostmat \right) }
   &= \trace{ \left( \left(\Predmat \Cparampriormat \Predmat^* \right)^{-1} \,
      \Predmat \Cparampostmat \Predmat^* \right) } =  \trace{ (\Cparampostmat\mat{G})}\,,
\end{aligned}
\end{equation}
where  $\mat{G} = \Predmat^* { \left( \Predmat \Cparampriormat \Predmat^* \right) }^{-1}\Predmat$.

To compute the expected information gain, note that the only term that depends
on the data and the prior is $\sqwnorm{ \wpreda - \predb  }
{\Cpredpriormat^{-1} }$. Using Lemma~\ref{l_inter2},
we have 
\begin{equation}\label{e_klde}
\begin{aligned}
\mathbb{E}_{\iparam}\, \mathbb{E}_{\obs|\iparam}\left(
\DKL{\Pa(\pred|\obs)}{\Pb(\pred)}\right)
= & \> \frac12 \left[ \trace(\Cparampost \mat{G}) +
  \trace(\Cparamprior \HMmat\Cparampost \mat{G} ) \right.\\
  & \qquad\qquad \left.- {\Npred} 
   - \log\det\Cpredprior^{-1}\Cpredpost\right].
\end{aligned}
\end{equation}
Only the last term survives as we now show. From $ \Cparampostmat(\mat{I} + \HMmat \Cparampriormat)=\Cparampriormat$, the first two terms simplify to 
\[ \trace(\Cparampost \mat{G}) +
  \trace(\Cparamprior \HMmat\Cparampost \mat{G} )= \trace(\Cparamprior \mat{G}) = \trace(\Cparamprior^{1/2} \mat{G}\Cparamprior^{1/2}) = \Npred.\]  
The final equality needs justification: it can be seen that $\Cparamprior^{1/2} \mat{G}\Cparamprior^{1/2}$ is an orthogonal projector which projects onto $\mathsf{range}(\Cparamprior^{1/2} \Predmat^*)$. Furthermore,   
note that the rank of an  
orthogonal projector equals its trace (this 
can be shown, for
example, using an SVD based argument). Since the first three terms
of~\eqref{e_klde} cancel, we are left with the last term which gives the
desired result.  
\end{proof}
%
Note that, this result can be generalized to the case where the end-goal operator does not have
full row-rank, for example by incorporating the formulation of the KL-divergence discussed in~\cite{attia2016reducedhmcsmoother}, where the covariance matrix is possibly rank deficient.

	\end{appendices}

\section*{Acknowledgments}
This material was based upon work partially supported by the NSF under Grant DMS-1127914 to the Statistical and Applied Mathematical Science Institute (SAMSI).

\bibliographystyle{unsrt}
 \section*{References}
	\bibliography{oed_references,data_assim_HMC}

\null
\vfill

\begin{flushright}
\scriptsize \framebox{\parbox{3.2in}{
The submitted manuscript has been created by UChicago Argonne, LLC,
Operator of Argonne National Laboratory (``Argonne"). Argonne, a
U.S. Department of Energy Office of Science laboratory, is operated
under Contract No. DE-AC02-06CH11357. The U.S. Government retains for
itself, and others acting on its behalf, a paid-up nonexclusive,
irrevocable worldwide license in said article to reproduce, prepare
derivative works, distribute copies to the public, and perform
publicly and display publicly, by or on behalf of the Government.
The Department of
Energy will provide public access to these results of federally sponsored research in accordance
with the DOE Public Access Plan. http://energy.gov/downloads/doe-public-access-plan. }}
\normalsize
\end{flushright}

\end{document}